\documentclass[10pt]{article}
\usepackage[margin=1in]{geometry}
\usepackage{amsmath,amssymb,amsthm,mathtools}
\usepackage{accents}
\usepackage{microtype}
\usepackage{enumitem}
\usepackage{aliascnt}
\usepackage{booktabs}
\usepackage[hidelinks]{hyperref}
\usepackage[nameinlink,noabbrev]{cleveref}
\newtheorem{theorem}{Theorem}[section]
\newaliascnt{lemma}{theorem}
\newtheorem{lemma}[lemma]{Lemma}
\aliascntresetthe{lemma}
\newaliascnt{proposition}{theorem}

\aliascntresetthe{proposition}
\theoremstyle{definition}
\newaliascnt{definition}{theorem}
\newtheorem{definition}[definition]{Definition}
\aliascntresetthe{definition}
\crefname{theorem}{Theorem}{Theorems}
\crefname{lemma}{Lemma}{Lemmas}
\crefname{proposition}{Proposition}{Propositions}
\crefname{definition}{Definition}{Definitions}
\newcommand{\OO}{\mathrm{O}}
\newcommand{\dist}{\operatorname{dist}}
\newcommand{\wdiam}{\operatorname{wdiam}}
\newcommand{\BF}{\textnormal{\textsc{BF-Dijkstra}}}
\newcommand{\Decompose}{\textnormal{\textsc{Decompose}}}
\newcommand{\Solve}{\textnormal{\textsc{Solve}}}
\newcommand{\Peel}{\textnormal{\textsc{Peel}}}
\newcommand{\Residual}{\textnormal{\textsc{Residual}}}
\newcommand{\peel}{\mathrm{peel}}
\newcommand{\res}{\mathrm{res}}
\newcommand{\Gaux}{\widehat G}
\newcommand{\Paux}{\widehat P}
\newcommand{\calS}{\mathcal S}
\newcommand{\lvec}[1]{\accentset{\scriptscriptstyle\leftarrow}{#1}}
\newcommand{\rvec}[1]{\accentset{\scriptscriptstyle\rightarrow}{#1}}
\setlist[enumerate]{leftmargin=*,itemsep=3pt,topsep=5pt}
\title{Size-Sensitive Padded Decompositions for Faster\\
Deterministic Negative-Weight Shortest Paths}
\author{Khoi Duong\\University of Minnesota\\\texttt{duong245@umn.edu}}
\date{}

\begin{document}
\maketitle
\begin{abstract}
We give a deterministic algorithm for single-source shortest paths in directed graphs with integral edge weights at least $-W$ that runs in time
\[
 \OO((m+n\log\log n)\log^2 n\log(nW)).
\]
This improves the deterministic bound of Li by a factor of $\log n$.
\end{abstract}

\section{Introduction}

The negative-weight single-source shortest-path problem on integer-weighted directed graphs admits near-linear-time algorithms, but the fastest bounds still depend on randomization. Bernstein, Nanongkai, and Wulff-Nilsen~\cite{BNWN25} combined scaling with directed low-diameter decompositions (LDDs) to obtain the first near-linear-time algorithm. Bringmann, Cassis, and Fischer~\cite{BCF23} reduced its polylogarithmic overhead. Li, Mowry, and Rao~\cite{LMR26} obtained $\OO((m+n\log\log n)\log n\log\log n\log(nW))$ time using a decomposition whose loss depends on the size of the recursive instance.

These algorithms rely on LDDs, where the only known fast algorithms are randomized. A deterministic almost-linear-time algorithm follows from the minimum-cost flow algorithm of van den Brand et al.~\cite{vdBCK+23}. Deterministic near-linear-time algorithms were later obtained by Haeupler, Jiang, and Saranurak~\cite{HJS26} and by Li~\cite{Li26}. The faster of the two, Li's algorithm, uses overlapping padded decompositions and runs in $\OO((m+n\log\log n)\log^3 n\log(nW))$ time. We improve its running time by a factor of $\log n$, leaving a gap of $\log n/\log\log n$ to the randomized bound of~\cite{LMR26}.

We work in the standard word-RAM model.

\begin{theorem}\label{thm:main}
Let $G$ be a directed graph with $n$ vertices, $m$ edges, and integral edge weights at least $-W$, where $W\ge1$. There is a deterministic algorithm that either returns a negative-weight cycle or computes single-source shortest paths in time
\[
 \OO((m+n\log\log n)\log^2 n\log(nW)).
\]
\end{theorem}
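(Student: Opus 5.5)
The plan is to follow the standard scaling framework of Bernstein, Nanongkai, and Wulff-Nilsen: reduce \cref{thm:main} to $\OO(\log(nW))$ calls of a routine $\Solve$. Each call takes a graph whose edge weights are at least $-1$ after scaling. It either returns a negative cycle or returns a potential $\phi$ such that all reduced weights $w_\phi(u,v)=w(u,v)+\phi(u)-\phi(v)$ are nonnegative. The final distances then come from one run of Dijkstra with a Thorup-style $\OO(m+n\log\log n)$ priority queue. With this reduction it suffices to show that $\Solve$ runs deterministically in $\OO((m+n\log\log n)\log^2 n)$ time.

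I would implement $\Solve$ recursively in the manner of Li. First, a deterministic padded-decomposition routine $\Decompose$ splits the instance into pieces of bounded weak diameter $\wdiam$, with possible overlaps. Each piece is solved recursively. The pieces are then glued, and the remaining negative edges are fixed by $\BF$, a Bellman--Ford/Dijkstra hybrid. The improvement over Li should come from making the decomposition \emph{size-sensitive}, in the spirit of Li, Mowry, and Rao.

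The key is the cost charged to the cut and overlap. A piece of size $n'$ inside an instance of size $n$ should pay loss roughly $\log(n/n')$ rather than $\log n$. This is implemented by growing balls with the deterministic stopping rule that chooses a radius where the boundary volume is at most about $\frac{\text{radius gap}}{\text{diameter}}\cdot\log(n/|\text{ball}|)$ times the volume. These ratios telescope, so along any root-to-leaf recursion path the total overhead sums to $\OO(\log n)$ instead of $\OO(\log^2 n)$.

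The concrete steps, in order, are as follows.
\begin{enumerate}
\item Define $\Decompose$ with its $\Peel$ and $\Residual$ phases and prove two guarantees. Every shortest path in the piece is padded except with charge $\OO(\log(n/n')/\text{diameter})$ per edge. The total overlap multiplicity satisfies $\sum_{\text{pieces}} |P| \le n(1+\OO(1/\log n))$ per level, which gives $\OO(n)$ total size over $\OO(\log n)$ levels.
\item Bound the number of negative edges per path remaining after the recursive calls, via a potential $\peel$/$\res$ accounting. This lets $\BF$ finish in $\OO(\log n)$ Dijkstra rounds per level, with each round costing $\OO(m+n\log\log n)$.
\item Sum over the recursion. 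There are $\OO(\log n)$ diameter-halving levels. Each level does total work $\OO((m+n\log\log n)\log n)$ only in aggregate, because the size-sensitive charges telescope. The result is $\OO((m+n\log\log n)\log^2 n)$ per scaling phase.
\end{enumerate}

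I expect step 2 combined with the telescoping to be the main obstacle. One must show deterministically that a path which crosses several overlapping pieces incurs a total number of Bellman--Ford rounds bounded by a sum of the $\log(n_i/n_{i+1})$ terms, not by $\log n$ per crossing. My first attempt would be to build an auxiliary graph $\Gaux$ in which each overlap copy is a separate vertex. I would then argue that a shortest path $\Paux$ in $\Gaux$ enters pieces of geometrically decreasing size. This gives total loss $\OO(\log n)$ by telescoping, and from it the required round bound.
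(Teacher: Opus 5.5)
\textbf{There is a genuine gap.} Your outer reduction matches the paper's: \cref{thm:main} follows by plugging an $\OO(M_0L^2)$ bound for one scaling step (\cref{thm:one-scale}) into the $\OO(\log(nW))$-step scaling reduction. The gap is in how you get that per-step bound. The guarantees you state in steps~1 and~2 are not compatible with the size-sensitive rule you describe, and the mechanism you propose to reconcile them does not work.

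\emph{Overlap.} To get $\OO(\log n)$ negative edges per comparison walk you need width $\delta=\Omega(d/\log n)$. Your rule then allows a shell of mass about $b\log(n/b)/\log n$ around a core of mass $b$. With cores of mass near $\sqrt n$ this can duplicate half the parent, so the per-level bound $n(1+\OO(1/\log n))$ fails. Forcing overlap $1+\OO(1/L)$ by ball growing needs shell ratio $\OO(1/L)$ for every ball. Reaching a heavy ball then takes up to $\Theta(L^2)$ shells, so width $d/L^2$ and $\Theta(L^2)$ rounds, which is the $\log^3 n$ regime.

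\emph{Radius.} With shell ratio $(\delta/d)\log(n/b)$, each failed shell multiplies $\log(n/b)$ only by $1-\Theta(\delta/d)$. Growing a light ball into a heavy one can therefore take $\Theta((d/\delta)\log L)$ shells, i.e.\ radius $\Theta(d\log L)$, independently of $\delta$. Staying within radius $\OO(d)$ forces a $\log L$ loss, either in width (the paper's $\delta_0=\Theta(d/(L\log L))$) or in duplication.

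\emph{Mixed widths.} A walk of clipped length at most $d$ can leave and re-enter small, narrowly padded children arbitrarily often. Nothing makes a walk in $\Gaux$ visit pieces of geometrically decreasing size. A single \BF{} call is governed by the narrowest padding present, up to $\Theta(L^2)$ rounds.

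The paper closes these gaps with ingredients your plan lacks:
\begin{itemize}
\item \emph{Potential accounting.} No per-level bound is used. All work is charged to $\Psi(z,d)=z(\log z+\log(4d/W))$. Each decomposition costs $\OO(L\Pi_X)$ with $\Pi_X=\Omega(M)$. A duplicated copy costs at most $6L$ potential per unit mass (\cref{lem:padding-cost}), which forces shells of at most $(\theta/L)\,b\log(M/b)$. Summing gives $\sum_X\Pi_X\le\Psi(M_0,d_0)=\OO(M_0L)$.
\item \emph{Ordered merge.} Negative edges in a merge are bounded by ascent numbers in an ordered cover (\cref{thm:merge}, cost $\OO(qI)$).
\item \emph{Staged peeling.} Peeling runs in stages with thresholds $M/2^{t_j}$ and widths $\delta_j=\Theta(d/(Lt_j))$. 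The stages are merged in reverse order, so each merge has a single width. Its $\OO(Lt_j(M(U_j)+D_j))$ cost is paid, via a geometric sum over stages, by the size credit $b\log(M/b)$ of the cores still present and by $\tau M(R)$ for the residual.
\item \emph{Residual cutoff.} The residual is cut into pieces of mass at most $M/2^\tau$ with $\tau=\Theta(\log L)$. Then $\kappa_{\res}\ge\tau M(R)$ pays for the $\OO(L\tau)$ rounds caused by the $\log L$ loss in $\delta_0$.
\item \emph{Restarts.} A failed residual division yields either a diameter-reducing split or a new peeling center. The exposed compact sets are disjoint, so there are fewer than $2^\tau\le\sqrt L$ restarts. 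This keeps the quadratic rescanning within $\OO(ML)$.
\end{itemize}
Without these ingredients, your step~3 simply asserts the conclusion.
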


\subsection{Techniques}

We adapt the size-dependent tradeoff of Li, Mowry, and Rao~\cite{LMR26}, which builds on Seymour's approach~\cite{Sey95}. Smaller recursive pieces make more progress toward terminating the recursion, so we allow them greater overhead: a piece with core mass $b$ inside a parent of mass $M$ may pay roughly $\log(M/b)$ per unit mass. A padded decomposition has two kinds of overhead. Padding duplicates vertices near the boundaries of the pieces, while its width controls the bound on potentially negative transitions in the auxiliary graph used to combine child solutions. We apply the size-dependent allowance to both. With $L=\Theta(\log n)$ and $d$ the distance scale of the parent, a piece with $\log(M/b)\approx t$ may have duplicated boundary mass about $(t/L)b$ and padding width about $d/(Lt)$.

Two difficulties arise. First, a comparison walk may traverse children with different padding widths, so a uniform bound for a single merge would be governed by the narrowest padding it encounters. We therefore peel balls in stages with decreasing mass thresholds and geometrically narrowing padding, and merge the stages in reverse order, narrowest padding first. Second, the vertices left after peeling may still carry most of the parent's mass. We cover this residual by pieces of mass at most $M/2^\tau$, where $\tau=\Theta(\log L)$; their small size pays for a uniform padding width of $\Theta(d/(L\log L))$.

When residual division fails, it exposes a large set of small weak diameter. Such a set either gives diameter reduction directly or identifies a new peeling center. The sets that supply new centers are pairwise disjoint, so there are fewer than $2^\tau\le\sqrt L$ restarts. This keeps the quadratic cost of rescanning the centers within the $\OO(ML)$ construction budget.

\subsection{Related work}

\paragraph{Scaling.} Scaling algorithms for shortest paths with integer weights go back to Gabow~\cite{Gab85}. Gabow and Tarjan~\cite{GT89} obtained $\OO(m\sqrt n\log(nW))$ time for negative weights, and Goldberg~\cite{Gol95} improved this to $\OO(m\sqrt n\log W)$ using the scaling step that the near-linear-time algorithms, including ours, build on.

\paragraph{Strongly polynomial algorithms.} Algorithms whose running time is independent of $W$ apply to arbitrary real weights. The Bellman--Ford algorithm takes $\OO(mn)$ time. Fineman~\cite{Fin24} gave the first improvement, running in $\tilde\OO(mn^{8/9})$ time. Huang, Jin, and Quanrud improved this to $\tilde\OO(mn^{4/5})$~\cite{HJQ25} and then to $\tilde\OO(mn^{3/4}+m^{4/5}n)$~\cite{HJQ26}. Further improvements were obtained by Li, Li, Rao, and Zhang~\cite{LLRZ26}, whose algorithm runs in $\OO(n^{2.5}\log^{4.5}n)$ time, and by Quanrud and Tajkhorshid~\cite{QT26}, whose algorithm runs in $\tilde\OO(mn^{1/\sqrt2})$ time using a subroutine of~\cite{LLRZ26}.

\subsection{Organization}
\Cref{sec:prelim} describes the scaling framework and the accounting potential. \Cref{sec:merge} combines valid potentials on an ordered cover, at a cost governed by how often walks ascend in the order. \Cref{sec:decomp} constructs the size-sensitive padded decomposition and its merge order, and \cref{sec:sssp} assembles the recursion.

\section{Framework and preliminaries}\label{sec:prelim}

\subsection{Overview}

Scaling reduces the problem to $\OO(\log(nW))$ steps, each of which halves the magnitude of the most negative reduced edge weight (\cref{sec:scaling}). Following~\cite{BNWN25,BCF23,LMR26}, a step adds half the current bound to every edge weight and seeks a potential that makes these shifted weights nonnegative. To find this potential, we recursively cover the current vertex set by overlapping subsets, following Li~\cite{Li26}. The decomposition uses clipped lengths, which replace negative shifted weights by zero. Each child either has mass smaller by a constant factor or has weak diameter at most half the current distance scale.

We solve the induced child graphs recursively and combine their potentials using \BF{} (\cref{sec:merge}). Padding bounds the number of potentially negative transitions on short comparison walks in the auxiliary graph. The resulting potential at the root completes the scaling step. If a merge instead produces a negative cycle or a walk certificate, we recover a negative cycle in the input graph (\cref{sec:sssp}).

\subsection{Graphs and potentials}

Let $G=(V,E,w)$ be the input graph, with $n=|V|$ and $m=|E|$. For a directed weighted graph $F$, write $F[S]$ for the subgraph induced by $S$, $n_F$ and $m_F$ for its numbers of vertices and edges, and $\dist_F$ for directed distance. When the weights of $F$ are nonnegative, set
\[
 B_F^+(s,r)=\{v:\dist_F(s,v)\le r\},\qquad
 B_F^-(s,r)=\{v:\dist_F(v,s)\le r\}.
\]
The weak diameter of $S\subseteq V(F)$ is $\wdiam_F(S)=\max_{u,v\in S}\dist_F(u,v)$; paths witnessing this bound may leave $S$. The length of a walk counts repeated edges with multiplicity.

A vertex potential $\phi$ changes an edge weight to $w_\phi(u,v)=w(u,v)+\phi(u)-\phi(v)$. It is valid for $w$ if all these reduced weights are nonnegative. Given a valid potential, a nonnegative SSSP computation followed by undoing the potential recovers the original distances~\cite{Joh77}.

\subsection{Scaling and BF-Dijkstra}\label{sec:scaling}

We use the scaling reduction of Goldberg~\cite{Gol95} in the formulation used by Bernstein, Nanongkai, and Wulff-Nilsen~\cite{BNWN25}.

\begin{lemma}[Scaling]\label{lem:scaling}
For integral weights at least $-W$, it suffices to implement a scaling step that returns either a negative-weight cycle or a potential with reduced weights at least $-W/2$. There are $\OO(\log(nW))$ steps, followed by a nonnegative SSSP computation, to obtain exact shortest paths.
\end{lemma}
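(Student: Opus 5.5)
The plan is the standard Goldberg-style scaling argument, kept explicit about integrality and about how negative cycles are reported. First I would normalize. Multiply every weight by $n+1$; call the result $w'$. A cycle has negative $w$-weight iff it has negative $w'$-weight, and shortest paths for $w$ can be read off from those for $w'$. The new lower bound is $W'=(n+1)W$. Then I would maintain a potential $\phi$ and a bound $B$, starting from $\phi=0$ and $B=W'$, with the invariant that $w'_\phi(e)\ge -B$ for every edge $e$.

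The main loop would go as follows. I apply the assumed scaling step to the integral weights $w'_\phi$, which are at least $-B$. It returns either a negative cycle or a potential $\psi$ with $w'_{\phi+\psi}\ge -B/2$. A negative cycle for $w'_\phi$ is also negative for $w'$, since potentials telescope around cycles, and hence also for $w$; in that case I return it. Otherwise I set $\phi\leftarrow\phi+\psi$ and $B\leftarrow\lceil B/2\rceil$, which is legitimate because the weights stay integral. I would take the scaling step to return integral potentials, rounding down if necessary. This halving loop reaches $B\le 1$, that is, $w'_\phi\ge -1$, after $\OO(\log W')=\OO(\log(nW))$ iterations.

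Next I would finish off the last unit of negativity. Set $w''=w'_\phi+1$; these weights are nonnegative. I compute shortest paths under $w''$ from an added super-source, then use them to define a valid potential for $w'_\phi$ as follows.
\begin{itemize}
\item Any simple path has at most $n-1$ edges, so its $w''$-length exceeds its $w'_\phi$-length by at most $n-1$.
\item Because $w'$ is a multiple of $n+1$, the $w'_\phi$-length of a cycle is a multiple of $n+1$.
\item Hence a cycle that is nonnegative under $w'_\phi$ cannot become negative through the additive shift of at most $n$.
\end{itemize}
The cleaner route is to run one more scaling step at $B=1$. It returns either a negative cycle or a potential with reduced weights at least $-1/2$. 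By integrality this means weights at least $0$, i.e.\ a valid potential for $w'$, hence for $w$, since a potential valid for $w'=(n+1)w$ divided by $n+1$ is valid for $w$. With the valid potential in hand, the final nonnegative SSSP computation plus undoing the potential (Johnson) gives exact distances for $w$.

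The main obstacle I expect is the integrality bookkeeping rather than the counting. I need that $\lceil B/2\rceil$ halving with integral potentials still yields $\OO(\log B)$ steps; it does, because $\lceil B/2\rceil\le B/2+1/2$. I also need that the terminal condition ``reduced weights at least $-1/2$'' forces exact nonnegativity. The scaling by $n+1$ is only needed if one stops at $B=1$ and uses the additive trick; with the extra step at $B=1$ it may be unnecessary. I would keep it anyway for safety, since it costs only an additive $\log n$ inside $\log(nW)$.
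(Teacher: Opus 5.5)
Your normalization by $n+1$, the halving loop with rounded-down potentials and $B\leftarrow\lceil B/2\rceil$, the $\OO(\log(nW))$ step count, and the transfer of negative cycles through potentials are all correct. This part matches the Bernstein--Nanongkai--Wulff-Nilsen formulation that the paper cites. The gap is in the final step.

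Your ``cleaner route'' needs the potential returned at $B=1$ to be integral. The lemma only promises reduced weights at least $-W/2$. Your own remedy fails exactly here: rounding $\psi$ down only gives reduced weights greater than $-3/2$, i.e.\ at least $-1$, which is no progress. The problem is real for this paper. A step with $W=1$ uses $c=w+1/2$, and the merged potential is a minimum $c$-weight of walks, so it lies in $\tfrac12\mathbb{Z}$. For instance, on a single edge whose current weight is $-1$, the potential is $0$ at the tail and $-1/2$ at the head, leaving reduced weight exactly $-1/2$. Halving alone cannot certify exact nonnegativity. The last unit must be removed by the nonnegative SSSP computation itself, so the factor $n+1$ is essential, not just a safety margin.

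Your first route points in the right direction, but as written it does not produce a valid potential. Distances $\pi$ under $w''=w'_\phi+1$ only give $w'_\phi(u,v)+\pi(u)-\pi(v)\ge-1$. Your third bullet (a nonnegative cycle stays nonnegative after a positive shift) is true but not what is needed. Once $w'_\phi\ge-1$, use two facts.

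First, no negative cycle exists. A simple cycle has at most $n$ edges, so its $w'$-weight, which equals its $w'_\phi$-weight, is at least $-n$. It is also a multiple of $n+1$, hence nonnegative.

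Second, for $s$-to-$v$ paths $P,Q$ we have $w'_\phi(P)-w'_\phi(Q)=(n+1)(w(P)-w(Q))$. So if $w(Q)<w(P)$ and $Q$ is simple, then
\[
 w''(P)\ge w'_\phi(P)\ge w'_\phi(Q)+n+1>w'_\phi(Q)+n-1\ge w''(Q).
\]
Hence Dijkstra under $w''$ from $s$ returns a shortest-path tree for $w$, from which you read off exact distances. This Dijkstra run is the nonnegative SSSP computation in the statement. If you want a valid potential instead, add a super-source with $w'$-weight-$0$ edges and give it potential $\max_v\phi(v)$, so its edges stay nonnegative. Simple paths then have at most $n<n+1$ edges, and the same argument applies.
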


Fix one scaling step. From now on, $w$ denotes the current reduced weights, so $G=(V,E,w)$ has $w(e)\ge-W$. Define the shifted weights and the clipped graph by
\begin{equation}\label{eq:scale}
 c(e)=w(e)+W/2,\qquad \ell(e)=\max\{c(e),0\},\qquad
 \bar G=(V,E,\ell).
\end{equation}
Thus $c(e)\ge-W/2$, and a valid $c$-potential gives the required lower bound $-W/2$ for reduced $w$-weights, since
\[
 w(u,v)+\phi(u)-\phi(v)=c(u,v)+\phi(u)-\phi(v)-W/2.
\]
We construct the decomposition using distances in $\bar G$, and solve for valid $c$-potentials on the induced subgraphs of $G$ corresponding to its children. When decomposing a vertex set $X$, we write $H=\bar G[X]$. For $S\subseteq X$, the graphs $G[S]$ and $H[S]$ have the same vertices and edges; we measure walks in them by $c$ and by $\ell$, respectively.

After solving the children, the merge uses the Bellman--Ford/Dijkstra hybrid~\cite{DI17,BCF23}. We use the comparison formulation stated by Li~\cite{Li26}.

\begin{lemma}[BF-Dijkstra]\label{lem:bfd}
Let $F$ be a directed weighted graph with $n_F=n^{\OO(1)}$, let $s$ be a source, and let $\eta\ge1$ be an integer. There is a deterministic algorithm that runs in time
\[
 \OO(\eta\,(m_F+n_F\log\log n))
\]
and returns either exact distances from $s$ or, for some vertex $x$, an $s$-to-$x$ walk strictly shorter than every $s$-to-$x$ walk with at most $\eta$ negative-weight edges.
\end{lemma}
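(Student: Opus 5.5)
We prove \cref{lem:bfd} with the Bellman--Ford/Dijkstra hybrid of~\cite{DI17,BCF23}. It alternates Dijkstra passes with Bellman--Ford relaxations of negative edges. The output is either exact distances or the comparison certificate of Li~\cite{Li26}.

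\textbf{Setup and invariant.} Let $E^-$ be the negative-weight edges of $F$. Initialize $d(s)=0$ and $d(v)=\infty$ for $v\neq s$. Run $\eta+1$ rounds. Round $i$ runs a Dijkstra pass over the nonnegative edges, seeded with every vertex whose label decreased since the last pass. It then relaxes every edge of $E^-$ out of a vertex settled in that pass. Write $\dist^{(i)}(x)$ for the minimum length of an $s$-to-$x$ walk with at most $i$ negative edges. By induction on $i$, after round $i$ each label satisfies $d(x)\le \dist^{(i)}(x)$. Every finite label is also the length of an actual $s$-to-$x$ walk, which we record by parent pointers together with the round counter. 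The induction step splits a walk with $i$ negative edges at its last negative edge. The prefix is handled by the hypothesis together with that edge's relaxation, and the suffix by the Dijkstra pass.

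\textbf{Termination and output.} After round $\eta+1$, two cases arise. If no label decreased during the final round, the labels are a fixpoint of all relaxations. Since each label is realized by a walk, they are exact distances and no negative cycle is reachable, so we return them. Otherwise some label $d(x)$ strictly decreased in round $\eta+1$. Then $d(x)<\dist^{(\eta)}(x)$, because after round $\eta$ the label was already at most $\dist^{(\eta)}(x)$. The walk realizing $d(x)$ is therefore strictly shorter than every $s$-to-$x$ walk with at most $\eta$ negative edges. We output it, reconstructed from the recorded parents and levels and of length $\OO(\eta n_F)$, as the required certificate.

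\textbf{Running time.} Each round costs one Bellman--Ford sweep, $\OO(m_F)$, plus one Dijkstra pass. The pass uses a monotone integer priority queue on the word-RAM, such as Thorup's queue with $\OO(\log\log n)$ time per operation. Labels are integers bounded by $n^{\OO(1)}$ times the maximum weight, which fits in words since $n_F=n^{\OO(1)}$. So a pass takes $\OO(m_F+n_F\log\log n)$, and $\eta+1$ rounds give the claimed bound. The main obstacle is making the Dijkstra pass cost only $\OO(m_F+n_F\log\log n)$ rather than an extra $\log$. This requires a bounded-integer queue supporting decrease-key. If keys are non-monotone across seeds, we first insert all seeds and only then run the pass, which keeps the queue monotone within a round.
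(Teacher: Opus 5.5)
\textbf{Off-by-one in the invariant and stopping rule.} The paper does not prove this lemma; it cites the Bellman--Ford/Dijkstra hybrid in Li's comparison form, which is what you reconstruct. Your overall design is right. However, your invariant and stopping rule are off by one, and as written your algorithm can return an invalid certificate.

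Your rounds run ``Dijkstra pass, then relax $E^-$''. With that order, the claim that after round $i$ every label satisfies $d(x)\le\dist^{(i)}(x)$ is false. Your induction step handles the nonnegative suffix after the last negative edge by ``the Dijkstra pass''. But in round $i$ that pass runs \emph{before} the $i$-th negative edge is relaxed, so the suffix is propagated only by the pass of round $i+1$.

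Concretely, take edges $s\to a$ of weight $-1$, $a\to b$ of weight $0$, and $s\to b$ of weight $5$, with $\eta=1$. Round~1 gives $d(b)=5$ and $d(a)=-1$, so $d(b)>\dist^{(1)}(b)=-1$. Round~2, the final one, lowers $d(b)$ to $-1$ in its Dijkstra pass. Your rule then outputs $s\to a\to b$ as a certificate, although this walk has one negative edge and length exactly $\dist^{(1)}(b)$. On this acyclic graph the correct output is the distances. Such a false certificate would break the proof of \cref{thm:merge}, which needs the returned walk to be strictly shorter than the direct source edge.

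\textbf{Fix.} The correct invariant is: after the Dijkstra pass of round $i+1$, every nonnegative edge satisfies $d(v)\le d(u)+w_F(u,v)$, and $d(x)\le\dist^{(i)}(x)$ for all $x$. Your induction goes through in this form. It uses that every vertex whose label changes is settled in the current or next pass and has its negative edges relaxed at the end of that round.

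So apply the test only to the negative-edge relaxation step of round $\eta+1$. If that step lowers some $d(y)$, the new value is strictly below the old one, which was already at most $\dist^{(\eta)}(y)$. Hence the recorded walk is a valid certificate. If it lowers nothing, the nonnegative edges are satisfied from the pass and the negative edges from the relaxation. The labels are then a fixpoint and, being realized by walks, exact.

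The number of rounds, the walk reconstruction, and your running-time analysis are unchanged.
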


\subsection{Mass and accounting}

We define a mass measure for graph searches and vertex duplication. Let $\deg_G(v)$ count both incoming and outgoing edges and let $\lambda=\lceil\log\log n\rceil$. Set
\begin{equation}\label{eq:mass}
 \omega(v)=\deg_G(v)+\lambda,\qquad M(S)=\sum_{v\in S}\omega(v).
\end{equation}
A Dijkstra search exploring $S$ costs $\OO(M(S))$ with the integer priority queue~\cite{Tho03}. A cover of a vertex set $S$ is a family $(Y_i)$ of subsets of $S$ whose union is $S$; its members may overlap. Each copy of a vertex retains its mass $\omega(v)$, so the cover's total mass is $\sum_i M(Y_i)$, which we call its \emph{incidence mass}.

Write
\begin{equation}\label{eq:global}
 M_0=M(V)=\Theta(m+n\log\log n),\qquad
 L=\lceil\log M_0\rceil=\Theta(\log n).
\end{equation}
A recursive instance $(X,d)$ carries a vertex set $X$ and a distance scale $d$. A child makes progress by reducing its mass or by halving $d$ with a corresponding weak-diameter bound; measuring progress by weak diameter follows~\cite{FHL+25}. We will bound the work of constructing a decomposition and combining the child solutions by $\OO(L)$ times the decrease in an accounting potential. These decreases telescope over the recursion. Define
\begin{equation}\label{eq:psi}
 \Psi(z,d)=z(\log z+\log(4d/W)),\qquad \Psi(0,d)=0.
\end{equation}
Children with smaller mass or a halved parameter have smaller potential.

\section{Ordered merge}\label{sec:merge}

Given valid potentials on overlapping children, we seek a potential on their union. The cost of the merge depends on how often a walk must move to a later child in a chosen order.

Let $\mathcal Y=(Y_1,\ldots,Y_k)$ be an ordered cover of $S$.\footnote{We consider covers with $\sum_i|Y_i|=n^{\OO(1)}$, as holds for every cover used below.} A \emph{$\mathcal Y$-coloring} of a walk $P=(v_0,\ldots,v_r)$ assigns a color $\chi(t)\in\{1,\ldots,k\}$ to each position $0\le t\le r$, with $v_t\in Y_{\chi(t)}$. Different occurrences of the same vertex may receive different colors. Write $\mathcal C_{\mathcal Y}(P)$ for the set of all such colorings. A position $t<r$ is an \emph{ascent} of $\chi$ if $\chi(t)<\chi(t+1)$, and $\chi$ is \emph{ascent-free} if it has no ascents. The \emph{ascent number} of $P$ is
\[
 a_{\mathcal Y}(P)=
 \min_{\chi\in\mathcal C_{\mathcal Y}(P)}
 |\{0\le t<r:\chi(t)<\chi(t+1)\}|.
\]

\begin{theorem}[Ordered merge]\label{thm:merge}
Let $\mathcal Y=(Y_1,\ldots,Y_k)$ cover $S$, with total incidence mass $I$, and suppose a valid $c$-potential is given on each $G[Y_i]$. Let $d>0$ and let $q$ be a positive integer, and suppose that every walk $P$ in $G[S]$ with $\ell(P)\le d$ satisfies $a_{\mathcal Y}(P)\le q$. There is a deterministic algorithm that, in $\OO(qI)$ time, computes a valid $c$-potential on $G[S]$, finds a negative $c$-cycle in $G[S]$, or returns a walk $P$ in $G[S]$ satisfying
\begin{equation}\label{eq:certificate}
 c(P)<0,\qquad \ell(P)>d.
\end{equation}
\end{theorem}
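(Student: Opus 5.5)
The plan is to reduce the merge to a single call of \BF{} (\cref{lem:bfd}) on an auxiliary graph $\Gaux$ built from copies of the children. In $\Gaux$, after a preliminary reweighting, only ascents carry negative weight.

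\textbf{Construction of $\Gaux$.} For every pair $(v,i)$ with $v\in Y_i$ create a node $(v,i)$. For every edge $(u,v)$ of $G[Y_i]$ add the edge $(u,i)\to(v,i)$ with weight $c(u,v)$. For each vertex $v$, sort the indices $i$ with $v\in Y_i$ as $i_1<\dots<i_p$. Join consecutive copies by two chains of weight-$0$ transition edges: a descending chain $(v,i_{s+1})\to(v,i_s)$ and an ascending chain $(v,i_s)\to(v,i_{s+1})$. Also add a super source $\hat s$ with weight-$0$ edges to every node. Then $\Gaux$ has $\OO(I)$ size measured in mass. A walk of $\Gaux$ from $\hat s$ projects to a walk $P$ of $G[S]$ with the same $c$-length. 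Conversely, a colored walk $(P,\chi)$ lifts to a walk of $\Gaux$ with the same $c$-length, in which each ascent of $\chi$ is realized by ascending-chain edges only. Hence $\hat s$-distances in $\Gaux$ are the $c$-distances from a virtual source in $G[S]$.

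\textbf{Potential making descents nonnegative.} Let $\Gaux^{\downarrow}$ be $\Gaux$ with the ascending chains removed. Its nodes are layered by child index, every inter-layer edge goes to a smaller index, and within layer $i$ the edges are those of $G[Y_i]$, whose $c$-weights become nonnegative under the given $\phi_i$. So $\Gaux^{\downarrow}$ has no negative cycle. Its exact $\hat s$-distances $h$ can be computed layer by layer for $i=k,k-1,\dots,1$. Each node of layer $i$ is initialized with $0$ and with the tentative values arriving along descending chain edges from higher layers. Then one Dijkstra runs inside layer $i$ with weights reduced by $\phi_i$, with the offsets handled through $\phi_i$. This costs $\OO(M(Y_i))$ per layer with the integer queue~\cite{Tho03}, so $\OO(I)$ in total. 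The function $h$ is a valid potential on $\Gaux^{\downarrow}$. Reweighting $\Gaux$ by $h$ therefore leaves every edge nonnegative except possibly the ascending chain edges. Moreover, a walk whose projected coloring has $a$ ascents uses at most $a$ maximal runs of ascending edges. Subdividing appropriately, or replacing each ascending chain by direct transition edges charged as one negative edge per ascent, lets us count at most $a$ negative edges per such walk. I expect this bookkeeping, making ``negative edges'' correspond to ``ascents'' and not to individual chain steps while keeping size $\OO(I)$, to be the main technical obstacle. The first attempt is to contract each maximal ascending run logically: in the BF phase, relax an entire ascending chain in one sweep.

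\textbf{Running \BF{} and interpreting its output.} Run \cref{lem:bfd} on the reweighted $\Gaux$ from $\hat s$ with $\eta=q+1$, in $\OO(qI)$ time. There are two outcomes.
\begin{enumerate}
\item If it returns exact distances $D$, set $\phi(v)=\min_i D(v,i)$ after undoing $h$. This is a valid $c$-potential on $G[S]$: for an edge $(u,v)$ with $D(u)$ attained at copy $i$, $u$ also has a copy in a child containing the edge, reachable through chain edges of weight $0$. Any negative cycle would have been exposed, since it makes distances undefined; we detect it via the second outcome, or directly, and extract it from the $\Gaux$ walk by projection.
\item Otherwise we obtain a walk $\hat P$ from $\hat s$ to some $(x,j)$ that is strictly shorter than every walk using at most $\eta$ negative edges. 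Drop $\hat s$ and project to a walk $P$ in $G[S]$ from some $y$ to $x$. The trivial walk $\hat s\to(x,j)$ has length $0$, so $c(P)<0$. Suppose $\ell(P)\le d$. Then by hypothesis $a_{\mathcal Y}(P)\le q$, and an optimal coloring lifts $P$ to an $\hat s$-to-$(x,j)$ walk with at most $q+1$ negative edges and the same length; the extra one accounts for adjusting to color $j$ at the end, using descents when possible. This contradicts strictness, so $\ell(P)>d$, which is exactly \eqref{eq:certificate}. If $P$ repeats a vertex with a negative closed subwalk, we may instead report that negative cycle.
\end{enumerate}
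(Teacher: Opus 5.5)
There are two genuine gaps.

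\textbf{Missing cross-child edges.} Your $\Gaux$ only contains edges $(u,i)\to(v,i)$ for $(u,v)\in E(G[Y_i])$, besides the chains. An edge of $G[S]$ that lies in no single child is therefore not represented at all. Such edges are typical for the covers this theorem is applied to. For the split $(B,S\setminus B^\circ)$ of an outward padded ball, an edge from $S\setminus B$ into $B^\circ$ has its tail only in the second member and its head only in the first, and padding does not forbid it. So your lifting claim fails. In the exact-distance case, $\phi$ need not satisfy $\phi(v)\le\phi(u)+c(u,v)$ on these edges; your validity argument explicitly assumes ``a child containing the edge.''

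Even when some child contains $(u,v)$, forcing the lift through such a layer does not match the definition of $a_{\mathcal Y}$. That definition colors positions independently, so consecutive positions may carry arbitrary colors. If $\chi(t)\ge\chi(t+1)$ but every child containing both endpoints has index larger than $\chi(t)$, your lift must ascend where $\chi$ does not. You need transitions from each copy $u_i$ to copies $v_j$ of every index. The paper provides these with one backward and one forward connector per edge and source copy.

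\textbf{Negative edges are not bounded by ascents.} The step you defer as bookkeeping is where the real difficulty lies. Since $\Gaux^{\downarrow}$ contains the weight-$0$ descending edges, $h(v,i_s)\le h(v,i_{s+1})$. Hence every ascending chain edge has reweighted weight $h(v,i_s)-h(v,i_{s+1})\le0$, and one ascent across $p$ intermediate copies may use $p$ negative edges, each counted by \cref{lem:bfd}. ``Relaxing a whole chain in one sweep'' changes the internals of \BF{}, which you may only use as a black box. Direct transition edges are quadratic in the number of copies.

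The paper's fix is structural. It first shifts each $\phi_i$ by a constant $\alpha_i$ so that transitions to earlier or equal children are nonnegative. It then routes transitions through separate chain vertices carrying prefix and suffix maxima $\lvec{\phi}_t(v)$ and $\rvec{\phi}_t(v)$. This makes all chain and exit edges nonnegative, so the forward connector is the only possibly negative edge per ascent.

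Your layered computation of $h$ is a workable substitute for the shifts $\alpha_i$. The monotonicity of $h$ along copies would even allow a one-negative-edge ascent: jump to the top copy of $v$, then descend. But without such a gadget and the cross-child connectors, the number of negative edges on your comparison walk is not bounded by $q+\OO(1)$, so the choice $\eta=q+1$ is unjustified.
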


\subsection{Shifting the child potentials}

Let $\phi_i$ be the given potential on $Y_i$. We first add a constant $\alpha_i\ge0$ to each $\phi_i$ so that every edge from a child to an earlier child, or to the same child, has nonnegative reduced weight.

Process the children in increasing order and maintain
\[
 \phi_{<i}(v)=\max_{j<i:\,v\in Y_j}\phi_j(v)
\]
for every $v$ that lies in some earlier child, where each $\phi_j$ has already been shifted. Choose
\begin{equation}\label{eq:offset}
 \alpha_i=\max\left\{0,\max_{\substack{u\in Y_i,\ (u,v)\in E(G[S])\\
                                  v\in Y_1\cup\cdots\cup Y_{i-1}}}
                 (\phi_{<i}(v)-c(u,v)-\phi_i(u))\right\},
\end{equation}
set $\phi_i\leftarrow\phi_i+\alpha_i$, and update the maxima on $Y_i$. After all shifts, every edge $(u,v)$ with $u\in Y_i$, $v\in Y_j$, and $i\ge j$ satisfies
\begin{equation}\label{eq:backward}
 c(u,v)+\phi_i(u)-\phi_j(v)\ge0.
\end{equation}
For $i=j$, adding the same constant to every potential value in $Y_i$ leaves internal reduced edge weights unchanged. For $i>j$, the inequality follows from \eqref{eq:offset}, since $\phi_{<i}(v)\ge\phi_j(v)$. Scanning incident edges once per membership of the source vertex computes all offsets in $\OO(I)$ time.

\subsection{Transition graph}

For each vertex $v\in S$ and each child $Y_i$ containing it, create a copy $v_i$. Conceptually, each original edge $e=(u,v)$ should join every source copy $u_i$ to every destination copy $v_j$, with weight
\begin{equation}\label{eq:conceptual}
 c(e)+\phi_i(u)-\phi_j(v).
\end{equation}
Adding all these edges would be quadratic in the number of copies. Instead, we build an auxiliary graph $\Gaux$ that uses two shared chains for each destination vertex $v$. From a source copy $u_i$, one chain reaches the destination copies $v_j$ with $j\le i$; the other reaches those with $j>i$. Only the second route may require a negative edge.

Let $\mu(v)$ be the number of children containing $v$, and let $j_1<\cdots<j_{\mu(v)}$ be their indices. Define
\[
 \lvec{\phi}_t(v)=\max_{h\le t}\phi_{j_h}(v),\qquad
 \rvec{\phi}_t(v)=\max_{h\ge t}\phi_{j_h}(v).
\]
For $1\le t\le\mu(v)$, create chain vertices $\lvec{v}_t$ and $\rvec{v}_t$. The prefix chain runs toward smaller indices and has the following edges, with weights in parentheses:
\[
 \lvec{v}_t\longrightarrow \lvec{v}_{t-1}
   \quad\left(\lvec{\phi}_t(v)-\lvec{\phi}_{t-1}(v)\right),\qquad
 \lvec{v}_t\longrightarrow v_{j_t}
   \quad\left(\lvec{\phi}_t(v)-\phi_{j_t}(v)\right),
\]
where the first edge exists for $t>1$. The suffix chain runs toward larger indices and has edges
\[
 \rvec{v}_t\longrightarrow \rvec{v}_{t+1}
   \quad\left(\rvec{\phi}_t(v)-\rvec{\phi}_{t+1}(v)\right),\qquad
 \rvec{v}_t\longrightarrow v_{j_t}
   \quad\left(\rvec{\phi}_t(v)-\phi_{j_t}(v)\right),
\]
where the first edge exists for $t<\mu(v)$. The edges to $v_{j_t}$ let a path leave either chain at that copy. All chain edges are nonnegative.

For each original edge $e=(u,v)$ and source copy $u_i$, let $t^-$ be the last position with $j_{t^-}\le i$ and $t^+$ the first with $j_{t^+}>i$. Whenever the respective positions exist, add the \emph{backward connector} and the \emph{forward connector}
\begin{align}
 u_i&\longrightarrow \lvec{v}_{t^-}
       \quad\left(c(e)+\phi_i(u)-\lvec{\phi}_{t^-}(v)\right),\label{eq:prefix-connector}\\
 u_i&\longrightarrow \rvec{v}_{t^+}
       \quad\left(c(e)+\phi_i(u)-\rvec{\phi}_{t^+}(v)\right).\label{eq:suffix-connector}
\end{align}
Through the backward connector and the prefix chain, $u_i$ reaches every destination copy $v_j$ with $j\le i$; through the forward connector and the suffix chain, it reaches every copy $v_j$ with $j>i$. In both cases the chain weights telescope to $\phi_j(v)$, so the total weight is \eqref{eq:conceptual}. The maximum $\lvec{\phi}_{t^-}(v)$ is attained at a child with index at most $i$, so the backward connector is nonnegative by \eqref{eq:backward}, and every backward route uses only nonnegative edges. On a forward route, only the connector can have negative weight.

The copies $v_{j_t}$ and the chain vertices $\lvec{v}_t$ and $\rvec{v}_t$ all represent the original vertex $v$. To express the edge weights uniformly, define
\[
 \zeta(v_{j_t})=\phi_{j_t}(v),
 \quad \zeta(\lvec{v}_t)=\lvec{\phi}_t(v),
 \quad \zeta(\rvec{v}_t)=\rvec{\phi}_t(v).
\]
Add a source $s_\ast$ with an edge of weight $-\zeta(x)$ to every copy and chain vertex $x$. Consider a walk $\Paux=(s_\ast,x_0,\ldots,x_p)$ in $\Gaux$ with $x_p=x$. It \emph{represents} the walk $P$ in $G[S]$ obtained by replacing each connector with its original edge and omitting the source and chain edges. The represented walk ends at the original vertex represented by $x$; if $\Paux$ has no connectors, $P$ consists of this single vertex. Each chain edge $x_h\to x_{h+1}$ has weight $\zeta(x_h)-\zeta(x_{h+1})$, and each connector has that same difference plus $c(e)$ for its original edge $e$. The connector contributions $c(e)$ sum to $c(P)$, so
\begin{equation}\label{eq:represented-weight}
 \begin{aligned}
 w_{\Gaux}(\Paux)
 &= -\zeta(x_0)+c(P)
    +\sum_{h=0}^{p-1}(\zeta(x_h)-\zeta(x_{h+1}))\\
 &= -\zeta(x_0)+c(P)+\zeta(x_0)-\zeta(x_p)\\
 &= c(P)-\zeta(x).
 \end{aligned}
\end{equation}
For a closed walk avoiding $s_\ast$, all $\zeta$-differences cancel and $w_{\Gaux}(\Paux)=c(P)$. A negative cycle in $\Gaux$ therefore represents a negative closed walk in $G[S]$, from which we can extract a negative $c$-cycle.

The graph has $\OO(\sum_v\mu(v))$ vertices and edges other than connectors, and at most two connectors per original edge and source copy. Build each vertex's list of child indices by visiting the children in order. For an edge $(u,v)$, process the copies $u_i$ in increasing order of $i$ and advance a pointer through $v$'s list to locate $t^-$ and $t^+$. The two lists require $\OO(\mu(u)+\mu(v))$ work for this edge, so the total construction time is
\[
 \OO\!\left(\sum_{v\in S}\mu(v)+
 \sum_{(u,v)\in E(G[S])}(\mu(u)+\mu(v))\right)=\OO(I).
\]
Since $\omega(v)\ge\lambda$, the graph also satisfies
\begin{equation}\label{eq:gadget-size}
 m_{\Gaux}+n_{\Gaux}\log\log n=\OO(I).
\end{equation}

\subsection{Walks with few negative edges}

\BF{} compares walks ending at the same auxiliary vertex. We therefore need a walk with few negative edges to the same endpoint, including when that endpoint is a chain vertex.

\begin{lemma}[Preserving the endpoint]\label{lem:preserving-endpoint}
Assume the hypothesis of \cref{thm:merge}. Let $\Paux:s_\ast\leadsto x$ be a walk in $\Gaux$ representing a walk $P$ in $G[S]$ with $\ell(P)\le d$. Then there is an $s_\ast$-to-$x$ walk of weight at most $w_{\Gaux}(\Paux)$ with at most $q+3$ negative edges.
\end{lemma}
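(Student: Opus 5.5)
The plan is to build a canonical walk in $\Gaux$ from a low-ascent coloring of $P$, and to copy the final segment of $\Paux$ unchanged so that the endpoint stays $x$. Everything rests on \eqref{eq:represented-weight}. Any $s_\ast$-to-$x$ walk in $\Gaux$ that represents $P$ has weight exactly $c(P)-\zeta(x)$. So it suffices to exhibit one such walk with at most $q+3$ negative edges, and its weight will automatically equal $w_{\Gaux}(\Paux)$.

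\emph{Trivial case.} If $\Paux$ contains no connector, it consists of the source edge $s_\ast\to x_0$ followed by chain edges. Chain edges are nonnegative, so $\Paux$ itself has at most one negative edge and we are done.

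\emph{Splitting off the last connector.} Otherwise, let $u_i\to y$ be the last connector on $\Paux$, representing the final edge $e=(u,v)$ of $P=(v_0,\ldots,v_r)$, so $u=v_{r-1}$. After this connector, $\Paux$ runs along a chain from $y$ to $x$ using only nonnegative edges. Let $P'=(v_0,\ldots,v_{r-1})$. Since $\ell\ge0$, we have $\ell(P')\le\ell(P)\le d$, and the same holds for every walk whose length is at most $\ell(P)$. By hypothesis $a_{\mathcal Y}(P)\le q$, so fix a coloring $\chi$ of $P$ with at most $q$ ascents.

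\emph{Recoloring.} Restrict $\chi$ to $P'$ and recolor the last position $r-1$ to $i$; this is legal because $u\in Y_i$. The recoloring changes only the transition $(r-2,r-1)$, so the new coloring $\chi'$ of $P'$ has at most $q+1$ ascents.

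\emph{Canonical walk for $P'$.} Start with the source edge $s_\ast\to(v_0)_{\chi'(0)}$. Then, for each $t<r-1$, go from the copy $(v_t)_{\chi'(t)}$ to the copy $(v_{t+1})_{\chi'(t+1)}$ as follows:
\begin{itemize}
\item if $\chi'(t+1)\le\chi'(t)$, take the backward connector and then descend the prefix chain to the exit edge;
\item if $\chi'(t+1)>\chi'(t)$, take the forward connector and then follow the suffix chain to the exit edge.
\end{itemize}
As noted in the construction of $\Gaux$, backward connectors and all chain edges are nonnegative. Hence the negative edges of this walk are among the source edge and the forward connectors used at ascents, at most $1+(q+1)$ of them. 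The walk ends at $u_i$.

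\emph{Completing the walk.} Append the original connector $u_i\to y$, which contributes at most one more negative edge, and then the original chain tail $y\leadsto x$ of $\Paux$. The result is an $s_\ast$-to-$x$ walk that represents $P$ and has at most $q+3$ negative edges. Its weight can be checked directly by telescoping: the prefix has weight $c(P')-\phi_i(u)$, the connector adds $c(e)+\phi_i(u)-\zeta(y)$, and the tail adds $\zeta(y)-\zeta(x)$. The total is $c(P)-\zeta(x)=w_{\Gaux}(\Paux)$, matching \eqref{eq:represented-weight}.

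The main obstacle is the endpoint requirement. \BF{} only compares walks ending at the same auxiliary vertex, and $x$ may be a chain vertex or a copy whose color differs from the one an optimal coloring would choose. Reusing $\Paux$'s own last connector and chain tail handles this. Its cost is the single extra ascent caused by forcing the color $i$ at position $r-1$, which together with the source edge and that last connector accounts for the $+3$.
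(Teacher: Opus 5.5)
Your proof is correct and follows the paper's argument: you strip the last connector, recolor the final position of $P'$ to the connector's source copy $i$ at the cost of one extra ascent, build the canonical walk to $u_i$ with at most $q+2$ negative edges, and reattach the last connector and chain tail of $\Paux$, with the weight fixed by \eqref{eq:represented-weight}. The differences are cosmetic and both valid: in the trivial case you keep $\Paux$ itself (source edge plus nonnegative chain edges) rather than the direct source edge to $x$, and you restrict a low-ascent coloring of $P$ to $P'$ instead of choosing one for $P'$ directly.
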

\begin{proof}
If $P$ has no edges, the direct source edge to $x$ has the same weight as $\Paux$ and at most one negative edge. Otherwise, let $e=(u,v)$ be the last edge of $P$, and let $u_i$ be the tail of the connector in $\Paux$ representing this occurrence of $e$. Delete the final edge occurrence from $P$ to obtain a walk $P'$ ending at $u$. Since $\ell(P')\le d$, choose a $\mathcal Y$-coloring of $P'$ with at most $q$ ascents. Recolor the last position of $P'$ with color $i$; this creates at most one additional ascent, at the last edge. Construct an auxiliary walk representing $P'$: start with the source edge to the copy given by the first color, and between consecutive positions follow the connector and chain route from the current copy to the copy given by the next color. Each ascent contributes at most one negative edge, its forward connector, and the source edge contributes at most one more, so this walk ends at $u_i$ and uses at most $q+2$ negative edges.

Append the last connector and remaining chain route of $\Paux$. This adds at most one negative edge and ends at $x$. The resulting auxiliary walk represents $P$ and ends at $x$, so it has exactly the weight of $\Paux$ by \eqref{eq:represented-weight}.
\end{proof}

\begin{proof}[Proof of \cref{thm:merge}]
Construct $\Gaux$ and run \BF{} with $\eta=q+3$. If it returns exact distances, define
\[
 \Phi(v)=\dist_{\Gaux}(s_\ast,v_i)+\phi_i(v)
\]
using any child $Y_i$ containing $v$. This value is the minimum $c$-weight of a walk in $G[S]$ ending at $v$, including the trivial walk. Indeed, every auxiliary walk represents such a walk, and every walk in $G[S]$ is represented by an auxiliary walk ending at any prescribed copy of its last vertex; by \eqref{eq:represented-weight}, both have the same corrected weight. Thus the value is independent of the copy and satisfies $\Phi(v)\le\Phi(u)+c(u,v)$ on every edge.

If \BF{} instead returns a walk $\Paux:s_\ast\leadsto x$, it is shorter than the direct source edge, which uses at most one negative edge. Thus $w_{\Gaux}(\Paux)<-\zeta(x)$, so the walk $P$ represented by $\Paux$ has $c(P)<0$. If $\ell(P)\le d$, \cref{lem:preserving-endpoint} gives a comparison walk of weight at most $w_{\Gaux}(\Paux)$, contradicting the guarantee for $\Paux$. Otherwise $P$ satisfies \eqref{eq:certificate}. Constructing $\Gaux$, running \BF{}, and processing the returned walk cost $\OO(qI)$ by \eqref{eq:gadget-size}.
\end{proof}

\section{Padded decomposition}\label{sec:decomp}

Given a vertex set $X\subseteq V$ and a parameter $d$, we seek a cover of $X$ whose members either have mass at most a constant fraction of $M=M(X)$ or have weak diameter at most $d/2$. Throughout this section, $H=\bar G[X]$.

\begin{theorem}[Padded decomposition]\label{thm:decomp}
For every $X\subseteq V$ and $W/2\le d\le M_0^2W/2$, there is a deterministic algorithm that, in $\OO(ML)$ time, constructs a cover $(Y_i)$ of $X$ and assigns each member a parameter $d_i\in\{d,d/2\}$. The cover has incidence mass $\OO(M)$. For a constant $0<\rho<1$, each member satisfies one of the following conditions:
\begin{itemize}[leftmargin=*,itemsep=3pt,topsep=5pt]
\item \textup{Size reduction:} $d_i=d$ and $M(Y_i)\le\rho M$.
\item \textup{Diameter reduction:} $d_i=d/2$ and $\wdiam_H(Y_i)\le d/2$.
\end{itemize}
The decomposition decreases the accounting potential by
\begin{equation}\label{eq:actual-drop}
 \Pi_X=\Psi(M,d)-\sum_i\Psi(M(Y_i),d_i)=\Omega(M).
\end{equation}
\end{theorem}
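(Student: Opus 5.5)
We build the cover by the three-phase scheme sketched in the introduction: staged peeling, residual division, and restarts. Fix $L$ as in \eqref{eq:global}, a constant $\rho$ (say $\rho=1/2$ or slightly larger), and $\tau=\Theta(\log L)$ with $2^\tau\le\sqrt L$.

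\textbf{Peeling.} We peel balls in stages $t=1,\dots,\tau$. At stage $t$ the admissible core masses are about $M/2^t$, and the padding width is $\delta_t=d/(Lt)$, up to constants, decreasing geometrically in the stage index as needed for the merge order. A center $s$ qualifies when its out- or in-ball $B_H^\pm(s,d/4)$ has mass at least about $M/2^t$; such a ball has weak diameter at most $d/2$ and can be taken as a diameter-reduction member. We grow the ball from radius $r_0$ over the window $[r_0,r_0+\delta_t]$ and choose a radius $r$ at which the layer $B(s,r+\delta_t')\setminus B(s,r)$ has mass at most $(t/L)\cdot M(B(s,r))$. This is the standard region-growing argument. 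Mass grows by a factor $(1+t/L)$ per step, so within roughly $L/t\cdot\log(M/b)$ steps of width $\delta_t/(\text{that count})$ a good radius exists. Because $\log(M/b)\approx t$, the width budget $d/(Lt)$ suffices. The member is the padded ball, and the core is removed from the working set. Each Dijkstra run is charged to the mass it explores. Scanned boundary mass that is not kept is $\OO(t/L)$ times the core mass, so the total incidence mass is $M+\sum_t (t/L)\cdot(\text{core mass})=\OO(M)$.

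\textbf{Residual.} Once no center qualifies, we cover the remaining set $R$ by pieces of mass at most $M/2^\tau\le\rho M$ with uniform padding width $\Theta(d/(L\log L))$. Their overhead is about $\tau/L$ per unit mass, affordable because $\Psi$ drops by about $\log(2^\tau)=\tau$ per unit for such pieces. We do this by greedy region growing from arbitrary vertices of $R$, capped at mass $M/2^\tau$. If growth from some vertex reaches the cap within radius $d/4$ before a good cut appears, the failure exhibits a heavy set of small weak diameter. Either that set has mass at least a constant fraction of $M$, and we emit it directly as a diameter-reduction member, or it contains a vertex whose ball qualifies as a peeling center. In the latter case we restart peeling from that center. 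The sets supplying new centers are disjoint and each has mass at least $M/2^\tau$, so there are fewer than $2^\tau\le\sqrt L$ restarts. Each restart rescans the stages at cost $\OO(M\sqrt L)$, for $\OO(ML)$ total.

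\textbf{Potential drop.} We compute the drop in \eqref{eq:actual-drop} using \eqref{eq:psi}. A size-reduction member of mass $z\le\rho M$ contributes a drop of $z\log(M/z)\ge z\log(1/\rho)$. A diameter-reduction member contributes $z(\log M-\log z+1)\ge z$, since $\log(4d/W)$ decreases by $1$. Duplicated mass $D$ adds at most $D(\log M+\log(4d/W))=\OO(D L)$ to the children's potential, using $d\le M_0^2W/2$, which makes $\log(4d/W)=\OO(L)$. So the duplication must satisfy $\sum_i M(Y_i)\log(M/M(Y_i))$-weighted slack, namely duplication charged to a piece of core $b$ is $\OO((b/L)\log(M/b))$. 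That is exactly the region-growing guarantee above with small enough constants, so the net drop is $\Omega(M)$.

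The main obstacle is the residual-division failure analysis. We must show that a failed growth certifies either a diameter-reduction set of mass $\Omega(M)$ or a legitimate new center. We must also keep the duplicated mass from the restarts, which are charged separately, within $\OO(M/L)$ per unit of $\Psi$-drop. We would first try to handle this by requiring any new center's ball to be disjoint from previously produced cores.
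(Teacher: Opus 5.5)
Your accounting of the potential drop, which charges duplication at rate $\OO((b/L)\log(M/b))$ against size and diameter credit, matches the paper's use of \cref{lem:padding-cost}. The construction, however, has a genuine gap: it reasons as if $H$ were undirected. You claim that a one-sided ball $B_H^{\pm}(s,d/4)$ has weak diameter at most $d/2$. But two vertices $u,v$ of an out-ball are only known to be reachable from $s$, and $\dist_H(u,v)$ can be arbitrarily large. Only sets that are close to a center in \emph{both} directions, as in \eqref{eq:compact-target}, have bounded weak diameter.

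This has three consequences for your scheme.
\begin{itemize}
\item A peeled ball, whose mass you bound only from below and which can exceed $\rho M$, satisfies neither the size-reduction nor the diameter-reduction condition.
\item A failed one-sided region growth in your residual step exposes a heavy one-sided ball, not a set of small weak diameter.
\item Your post-failure dichotomy (``the exposed set has mass $\Omega(M)$, or it yields a peeling center'') tests the wrong quantity. The exposed set can have mass far below $\Omega(M)$ while both balls around its center are heavy. Then peeling in either direction produces an oversized one-sided member. Emitting a heavy compact set directly, rather than through padded balls, also gives no padding against the rest of $X$.
\end{itemize}

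The paper is organized around exactly this obstacle. Peeled pads are always size-reduction children: each stored center carries a \emph{light} direction $\sigma$ with $M(B_H^\sigma(s,7\Delta))<M/8$, and every pad stays inside that ball. Small weak diameter arises only from the heavy split of \cref{lem:heavy-split}, whose central child $B_+\cap B_-$ is compact. That split is used only when an in-ball and an out-ball around the same center are both heavy. This happens in two places:
\begin{itemize}
\item Inside \cref{lem:local}, where heaviness is measured relative to the current piece so that side children shrink geometrically. This is why residual division is a depth-$\OO(\tau)$ recursion rather than one-shot growth capped at $Z$.
\item In Algorithm~2, which tests the two radius-$7\Delta$ balls around the exposed center. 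If both are heavy, it returns the three-child split of $X$ with shells in $[7\Delta,8\Delta]$; otherwise it stores the light direction.
\end{itemize}

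The second missing piece is the disjointness of the exposed sets, which is what bounds the number of restarts by $2^\tau$. You assert it, and your proposed fix (new centers' balls avoiding earlier cores) does not supply it, especially since every attempt restarts from $X$ and discards earlier cores. In the paper it follows from the stage invariant \eqref{eq:stage-invariant}. The test radii $\gamma_j$ decrease from $7\Delta$ to just above $6\Delta$ while the thresholds $M/2^{t_j}$ decrease. At the end, every stored pair therefore satisfies $M(B_H^{\sigma_i}(s_i,6\Delta)\cap R)\le Z$. A compact $T\subseteq R$ meeting a set compact around $s_i$ lies in that ball by the triangle inequality with radius $2\Delta$, which forces $M(T)\le Z$.

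The light-direction condition is exactly the base case of this invariant. The invariant also supplies the upper bound $b_i\le M/2^{t_{j-1}}$ in \eqref{eq:rank-bracket}. Your sketch takes that bound for granted when it writes $\log(M/b)\approx t$ and charges shells to the size credit.

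Finally, your width arithmetic does not close. Splitting a window of length $\Theta(d/(Lt))$ into $\Theta(L)$ steps gives padding width $\Theta(d/(L^2t))$, not $\Theta(d/(Lt))$. The paper uses windows $[\gamma_j,\gamma_{j-1}]$ of length $\Delta/2^j=\Theta(d/t_j)$, each split into $\Theta(L)$ shells. The doubling $t_j=2^{j+1}$ makes these lengths summable, so the total search radius stays within $\Delta$.
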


The children are solved recursively, each $G[Y_i]$ with parameter $d_i$, and their solutions are then combined as follows.

\begin{theorem}[Reconstruction]\label{thm:reconstruction}
For the decomposition constructed in \cref{thm:decomp}, given valid $c$-potentials on the graphs $G[Y_i]$, there is a deterministic algorithm that, in $\OO(L\Pi_X)$ time, computes a valid $c$-potential on $G[X]$, finds a negative $c$-cycle in $G[X]$, or returns a walk $P$ in $G[X]$ satisfying \eqref{eq:certificate}.
\end{theorem}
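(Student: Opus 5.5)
The plan is to reduce to \cref{thm:merge}, applied in stages that follow the peeling order of the decomposition. A single application of \cref{thm:merge} to the whole cover does not work. Its running time is $\OO(qI)$ with $q$ bounded by the number of ascents of a walk of $\ell$-length at most $d$. In the decomposition of \cref{thm:decomp}, a child with core mass $b$ and $\log(M/b)\approx t$ has padding width about $d/(Lt)$. A walk of length $d$ can therefore cross $\Theta(Lt)$ boundaries of such children, and a uniform $q$ would be governed by the narrowest padding. We therefore merge the stages in reverse peeling order, narrowest padding first. The residual cover, with uniform width $\Theta(d/(L\log L))$ and mass at most $M/2^\tau$ per piece, is handled as one stage of its own.

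\textbf{Stage-by-stage merging.} Let stage $t$ consist of the children peeled with mass threshold about $M/2^t$, together with the part of $X$ not yet peeled at that stage. Having obtained a valid potential on the union of the later stages, we treat that union as a single ordered child placed last. We then apply \cref{thm:merge} to the cover formed by the stage-$t$ pieces, in peeling order, followed by this block.

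\textbf{Ascent bound.} Padding gives the key property: every vertex within distance about $d/(Lt)$ of a stage-$t$ piece core lies in that piece, and the pieces are ordered by peeling. Consider a walk $P$ with $\ell(P)\le d$, and color each position by the earliest stage-$t$ piece whose core ball contains the vertex with its padding; if there is no such piece, use the block. An ascent then forces the walk to leave a padded region, which costs $\ell$-length at least the padding width. Hence $a(P)\le q_t=\OO(Lt)$. I would verify this with the standard argument that balls are carved in order: a walk colored by the earliest core containing it can move to a later color only after leaving the padding of the current core.

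\textbf{Time accounting.} Stage $t$ costs $\OO(q_t I_t)=\OO(Lt\, I_t)$, where $I_t$ is the incidence mass of the stage-$t$ pieces plus the mass of the block. The cost of the block is a problem. A block reused at every stage would cost $\sum_t Lt\,M$, which is too much. To avoid this, I would merge only the part of the block adjacent to stage-$t$ pieces, or fold the block cost into the peeled mass. Alternatively, since \cref{thm:merge} needs only the incidence mass of the pieces involved, the block could be contracted to its boundary copies. Once this is handled, the requirement is
\[
\sum_t Lt\,I_t=\OO(L\Pi_X).
\]
For this I would compare with $\Pi_X$ computed from \eqref{eq:psi}. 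A child of mass $b\le M/2^t$ with $d_i=d$ contributes a drop of about $b\log(M/b)\ge bt$. The duplicated padding mass $(t/L)b$ adds back at most $(t/L)b\,(\log M+\log(4d/W))=\OO(tb)$, using $d\le M_0^2W/2$ so that $\log(4d/W)=\OO(L)$. With the constants tuned, this gives a net drop of $\Omega(tb)$ per child. Diameter-reduced children drop by about $b$ per unit mass from halving $d$; they have $q=\OO(L)$, which fits. The residual stage has $q=\OO(L\log L)$ and pieces of mass at most $M/2^\tau$ with $\tau=\Theta(\log L)$, so each unit pays $\log L$ and drops by $\Omega(\tau)$.

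\textbf{Failure outputs and the main obstacle.} Negative cycles and certificate walks returned by an inner merge lie in $G[S]\subseteq G[X]$ and satisfy \eqref{eq:certificate} with the same $d$, so they can be returned directly. I expect the main obstacle to be the time accounting: keeping the repeated inclusion of the growing merged block within $\OO(L\Pi_X)$, and matching the telescoped drop to $\sum_t Lt\,I_t$. This holds when $\Pi_X=\Omega(\sum_t t\,I_t)$, and proving that is where the exact thresholds from the construction must be used.
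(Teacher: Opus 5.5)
\textbf{The block accounting is the missing step, and the fix you propose goes the wrong way.} Your staged plan matches the paper: merge the residual cover first, then the peeling stages in reverse order, each via \cref{thm:merge} with $q_j=\OO(Lt_j)$. But you leave $\sum_j t_jI_j=\OO(\Pi_X)$ open, and you conclude that reusing the merged block at every stage is too expensive. Merging only part of the block, or contracting it, is not supported by \cref{thm:merge}: that theorem returns a potential only on the union of the cover it is given, and the next stage needs a valid potential on all of $G[U_j]$. Neither is needed. The direct approach works once you charge per vertex instead of bounding the block by $M$. The block at stage $j$ is $U_{j+1}$, and the stage costs $\OO(Lt_j(M(U_j)+D_j))$. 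The essential feature is that the thresholds double, $t_j=2^{j+1}$. A core accepted in stage $j$ lies in $U_h$ exactly for $h\le j$, so it is charged $b_i\sum_{h\le j}t_h<2b_it_j\le4b_i\log(M/b_i)$. The last step uses the bracket $t_j/2\le\log(M/b_i)\le t_j$, which comes from the threshold test and the invariant after stage $j-1$. A residual vertex lies in every $U_j$ and is charged $\sum_jt_j<2\tau$. Hence $\sum_jt_jM(U_j)\le4\kappa_{\peel}+2\tau M(R)$, and the shell bound gives $\sum_jt_jD_j\le\theta\kappa_{\peel}$. The term $\tau M(R)$, which also covers the $\OO(L\tau M(R))$ residual merge, is paid by $\kappa_{\res}\ge\tau M(R)$, because every residual piece has mass at most $Z=M/2^\tau$ relative to the parent mass $M$. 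Finally, $L\kappa\le2L\Pi_X$ since $\kappa\le\Pi^\circ_X\le2\Pi_X$. If the thresholds grew linearly instead, this charge would be quadratic in $\log(M/b_i)$. So the doubling is what makes the argument close.

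\textbf{Your child order is wrong for inward balls.} Placing the stage pieces in peeling order before the block is correct only for outward balls. For an inward ball, only walks from outside the pad into the core are long. The padded split of the active set $U$ is therefore $(U\setminus B^\circ,B)$, and the pad must come after the rest. In your order, the ascent count can grow with the number of inward balls in a stage rather than with $\ell(P)/\delta_j$. For example, take zero-length edges $v_i\to v_{i+1}$ and $v_i\to s_i$, and peel inward balls around $s_1,\ldots,s_k$ in that order, attaching mass at each $s_i$ as needed to pass the threshold tests. Since $v_1,\ldots,v_{i-1}$ have already been removed, $v_i$ lies only in the $i$-th pad. The walk $v_1\to\cdots\to v_k$ then has forced colors $1,\ldots,k$, giving $k-1$ ascents at length zero. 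The paper's order is the outward pads in chronological order, then the block, then the inward pads in reverse chronological order. Each peel refines the current active set by a $\delta_j$-padded split of $H[U]$, so \cref{lem:composition} gives $a\le\lfloor d/\delta_j\rfloor=\OO(Lt_j)$ without an ad hoc coloring argument. You also omit the heavy three-child output and the $\tau<4$ fallback. These are routine: their merges cost $\OO(ML)$ and $\OO(M)$, and $\Pi_X=\Omega(M)$.
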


\subsection{Construction overview}\label{sec:decomp-overview}

The construction alternates peeling and residual division. Peeling removes ball cores around stored centers and keeps their pads as children. Residual division applies the local decomposition repeatedly to the remaining vertices, seeking a cover by pieces of mass at most a fixed cutoff. If it succeeds, these pieces and the peeled pads form the final cover.

Otherwise, residual division returns a set above the cutoff with small weak diameter, together with a center. We test that center in the parent graph: it either gives a decomposition with diameter reduction or supplies a light direction for peeling. In the latter case, we store the center and direction and restart from the full parent set. The exposed sets are pairwise disjoint, which bounds the number of attempts.

For a successful residual division, reconstruction first merges its cover, then the peeled stages in reverse order. \Cref{sec:ascents} bounds walk ascents under padded splits, and \cref{sec:overlap-accounting} bounds the potential consumed by duplicated vertices. \Cref{sec:local} gives the local decomposition used by residual division in \cref{sec:residual-cover}. \Cref{sec:peeling} describes peeling and its staged merge, \cref{sec:construction} combines the subroutines, and \cref{sec:reconstruction} proves the reconstruction bound.

\subsection{Walk ascents}\label{sec:ascents}

\begin{definition}[Padded split]\label{def:padded-split}
Let $Y_1,Y_2\subseteq S\subseteq X$ with $Y_1\cup Y_2=S$. The ordered pair $(Y_1,Y_2)$ is a \emph{$\delta$-padded split} of $H[S]$, for $\delta>0$, if
\[
 \dist_{H[S]}(u,v)\ge\delta
 \qquad(u\in Y_1\setminus Y_2,\ v\in Y_2\setminus Y_1).
\]
\end{definition}

\begin{definition}[Padded ball]\label{def:padded-ball}
Let $S\subseteq X$. A \emph{$\delta$-padded ball} in $H[S]$, with center $s\in S$, radius $r\ge0$, direction $\sigma\in\{+,-\}$, and width $\delta>0$, is the pair
\[
 B^\circ=B_{H[S]}^\sigma(s,r),\qquad B=B_{H[S]}^\sigma(s,r+\delta).
\]
We call $B^\circ$ its \emph{core}, $B$ its \emph{pad}, and $\partial B=B\setminus B^\circ$ its \emph{shell}.
\end{definition}

An outward $\delta$-padded ball $(B^\circ,B)$ in $H[S]$ gives the padded split $(B,S\setminus B^\circ)$. Here $Y_1\setminus Y_2=B^\circ$ and $Y_2\setminus Y_1=S\setminus B$. Every vertex of $B^\circ$ has distance at most $r$ from $s$, and every vertex of $S\setminus B$ has distance greater than $r+\delta$, so the triangle inequality gives distance greater than $\delta$ from the former to the latter. An inward padded ball gives the split $(S\setminus B^\circ,B)$ by reversing directions.

\begin{definition}[Refinement]\label{def:refinement}
A \emph{refinement} of an ordered cover replaces one member $A$ by an ordered cover of $A$, inserting the new members consecutively in the position of $A$.
\end{definition}

\begin{lemma}[Padding composition]\label{lem:composition}
Let $S\subseteq X$ and $\delta>0$. Suppose $\mathcal Y$ is obtained from $(S)$ by finitely many refinements, each replacing a member $A$ by a $\delta$-padded split of $H[A]$. Then every walk $P$ in $H[S]$ satisfies
\[
 a_{\mathcal Y}(P)\le\lfloor\ell(P)/\delta\rfloor.
\]
\end{lemma}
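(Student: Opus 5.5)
We argue by induction on the number of refinements, proving the stronger, walk-by-walk statement: for every walk $P=(v_0,\ldots,v_r)$ in $H[S]$ there is a $\mathcal Y$-coloring with at most $\lfloor\ell(P)/\delta\rfloor$ ascents. For the trivial cover $(S)$ the constant coloring has no ascents. For the inductive step, suppose $\mathcal Y'$ is obtained from $\mathcal Y$ by replacing a member $A$, at position $p$, with a $\delta$-padded split $(A_1,A_2)$ of $H[A]$ placed at positions $p,p+1$. Take a coloring $\chi$ of $P$ for $\mathcal Y$ with at most $\lfloor\ell(P)/\delta\rfloor$ ascents, chosen (this will matter) to minimize the number of ascents. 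Positions colored by members other than $A$ keep their relative order (indices after $p$ shift by one), so only positions colored $p$ need new colors in $\{p,p+1\}$. This creates new ascents only inside maximal runs of consecutive positions colored $p$, and only where the new color changes from $p$ to $p+1$.

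The plan is to recolor each maximal run $R=(v_a,\ldots,v_b)$ of color $p$, a subwalk lying in $A$ and hence a walk in $H[A]$. Inside $R$, assign color $p+1$ (i.e.\ $A_2$) as long as possible and switch to $p$ only when forced. Concretely, scan the run and use one ascent-free pattern of the form ``first $p+1$, then $p$'' whenever the run permits; an ascent $p\to p+1$ is needed only when the run visits $A_1\setminus A_2$ and later $A_2\setminus A_1$. Greedily, the subwalk between such a visit and the next forced visit to $A_2\setminus A_1$ has $\ell$-length at least $\dist_{H[A]}\ge\delta$ by the padded-split property. Hence the number of new ascents inside $R$ is at most $\lfloor\ell(R)/\delta\rfloor$, where $\ell(R)$ is the length of the subwalk. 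Ascents at the run boundaries do not increase: an incoming boundary from a larger color into the start of the run stays a non-ascent or becomes one only if it was one before. To avoid that issue I would instead charge globally. Define the recoloring as a greedy scan along the whole walk that always prefers the larger color $p+1$ when the vertex is in $A\cap A_2$, and ascend only upon entering $A_2\setminus A_1$ after having been forced to $p$ by a vertex of $A_1\setminus A_2$.

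The cleaner route, which I would actually write, is a direct (non-inductive) potential argument. Since ascents in the nested refinement structure are determined by the hierarchy, I would prove by induction the claim that the greedy coloring has at most $\lfloor \ell(P)/\delta\rfloor$ ascents by showing that between any two consecutive ascents of the greedy coloring the walk travels $\ell$-distance at least $\delta$. Each ascent in the refined cover occurs at a split $(A_1,A_2)$ at some level of the hierarchy, preceded by a vertex forced into the $A_1$-side, i.e.\ in $A_1\setminus A_2$, at that level. The main obstacle is exactly this disjointness: showing that the $\delta$-long segments witnessing different ascents do not overlap even when they come from splits at different levels of the hierarchy. I would handle it by arguing that the greedy coloring never descends and then re-ascends within a segment without a new forced vertex. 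A segment from a forced vertex $u\in A_1\setminus A_2$ to the next ascent vertex $v\in A_2\setminus A_1$ lies inside $A$ (the walk remains in the colored region of $A$), so $\dist_{H[A]}(u,v)\ge\delta$ bounds its length. Consecutive such segments $[u,v]$ are disjoint intervals of positions, so summing gives at most $\ell(P)/\delta$ ascents, and integrality gives the floor.
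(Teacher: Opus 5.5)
Your proposal does not establish the bound. The step you flag as the main obstacle is where it fails.

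The walk-by-walk induction cannot start from an arbitrary coloring of the coarser cover, even an ascent-minimal one. Ascents created inside runs add to the inherited ones, and the $\delta$-segments witnessing ascents at different levels can overlap. Here is a concrete case. Let $S=\{u,w,v\}$ with edges $u\to w$ of length $\delta$ and $w\to v$ of length $\delta/2$. Split $S$ into $(\{u,w\},\{w,v\})$, then split $\{u,w\}$ into $(\{u\},\{w\})$. Both splits are $\delta$-padded, the final cover is $(\{u\},\{w\},\{w,v\})$, and $P=(u,w,v)$ has $\lfloor\ell(P)/\delta\rfloor=1$. Read your level-by-level rule as: prefer $A_2$ until a vertex of $A_1\setminus A_2$ forces $A_1$, then ascend only on entering $A_2\setminus A_1$. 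It gives the minimal coloring $(1,1,2)$ at the first level and then $(1,2,3)$, which has two ascents, although $(1,3,3)$ has one. The other reading, ascending to $A_2$ whenever the vertex lies in $A_2$, fails on an arbitrarily short walk $(u',w',u')$ with $u'\in A_1\setminus A_2$ and $w'\in A_1\cap A_2$. So consecutive ascents of this greedy need not be separated by length $\delta$.

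Now take the flat greedy on the final cover: move to the largest color not exceeding the current one among members containing the next vertex, otherwise jump to the largest color containing it. For this greedy the separation claim is true, but not for your reason. The split separating the two colors of an ascent need not have any vertex of $A_1\setminus A_2$ before the ascent. Split $\{x,y,z\}$ into $(\{y\},\{x,y,z\})$, then split the second member into $(\{x\},\{y,z\})$, with $\ell(x,y)=\delta$. The flat greedy colors $(x,y,z)$ by $(2,1,3)$. Its ascent goes from $\{y\}$ to $\{y,z\}$, which only the outer split separates, and that split has $A_1\setminus A_2=\emptyset$. The length is actually supplied by the edge $x\to y$ of the inner split, located before the ascent.

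The missing idea is to separate the two effects, as the paper does.

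First show that every walk with $\ell(P)<\delta$ has an ascent-free coloring. Induct over the refinements with the invariant of \emph{zero} ascents. Each maximal block carrying $A$'s color is a walk in $H[A]$ shorter than $\delta$. Its first vertex outside $Y_2$ lies in $Y_1\setminus Y_2$, so after that vertex the block cannot reach $Y_2\setminus Y_1$ and stays in $Y_1$. Color it $Y_2$ before that vertex and $Y_1$ from it on. This creates no ascent inside the block, and none at its ends, because $Y_1,Y_2$ are consecutive and the previous coloring was ascent-free. Since no ascents are inherited, nothing accumulates across levels.

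Then cut a general walk greedily into blocks of length less than $\delta$ joined by cut edges. Each block together with its following cut edge has length at least $\delta$, and these edge sets are disjoint, so there are at most $\lfloor\ell(P)/\delta\rfloor$ cuts. Coloring each block ascent-free leaves at most one ascent per cut.

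This short-walk lemma would also rescue your greedy route. A stays-ahead exchange argument shows the flat greedy minimizes ascents on every walk. Hence the subwalk from just after one greedy ascent through the next ascent has no ascent-free coloring, so its length is at least $\delta$. Either way, the short-walk lemma is the essential ingredient your proposal lacks.
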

\begin{proof}
First let $P$ be a walk in $H[S]$ of length less than $\delta$. Initially every position has color 1. Maintain an ascent-free coloring with respect to the current list. When a member $A$ is replaced by a $\delta$-padded split $(Y_1,Y_2)$ of $H[A]$, each maximal block of positions carrying its color is a walk in $H[A]$ of length less than $\delta$. Recolor each such block without ascents as follows. If the block stays inside $Y_2$, give every position the color of $Y_2$. Otherwise, use the color of $Y_2$ before the first vertex outside $Y_2$, and the color of $Y_1$ from that vertex onward. That vertex lies in $Y_1\setminus Y_2$, and the rest of the block stays inside $Y_1$, since a walk from $Y_1\setminus Y_2$ to $Y_2\setminus Y_1$ has length at least $\delta$. Since $Y_1$ and $Y_2$ occupy consecutive positions in the list, in that order, no ascent is created within or at the ends of a block. Induction over the refinements gives $a_{\mathcal Y}(P)=0$.

For a general walk $P$, greedily form blocks of positions whose internal edges have total length less than $\delta$. Cut at an edge when including it would bring the length to $\delta$ or more, and start the next block at its head. Each completed block together with its following cut edge has length at least $\delta$, and these edge sets are disjoint, so there are at most $\lfloor\ell(P)/\delta\rfloor$ cuts. Color each block without ascents. Each cut creates at most one ascent, so $a_{\mathcal Y}(P)\le\lfloor\ell(P)/\delta\rfloor$.
\end{proof}

A $\delta'$-padded split with $\delta'\ge\delta$ is also $\delta$-padded, so \cref{lem:composition} applies to refinements of different widths, with $\delta$ the smallest width.

\subsection{Accounting for overlaps}\label{sec:overlap-accounting}

The children of a decomposition overlap, and overlaps increase their total potential. If the children instead partitioned the parent and retained parameter $d$, their masses $b_i$ would sum to $M$, and the potential would decrease by
\[
 \Psi(M,d)-\sum_i\Psi(b_i,d)
 =\sum_i b_i\log(M/b_i).
\]
Thus splitting into smaller children permits more work, and halving a child's parameter further decreases its potential by its mass. To measure how much of this decrease the overlaps consume, we assign each vertex to exactly one child containing it.

\begin{definition}[Core partition]\label{def:core-partition}
A \emph{core partition} of a cover $(Y_i)$ of $S$ is a partition $S=\bigsqcup_i Y_i^\circ$ with $Y_i^\circ\subseteq Y_i$; empty cores are allowed. We call $Y_i^\circ$ the \emph{core} of $Y_i$ and $\partial Y_i=Y_i\setminus Y_i^\circ$ its \emph{shell}, and write $b_i=M(Y_i^\circ)$. The cover's \emph{duplicated mass} is
\[
 D_S=\sum_i M(\partial Y_i)=\sum_i M(Y_i)-M(S).
\]
\end{definition}

For a parent instance $(S,d)$ and child parameters $d_i\in\{d,d/2\}$, define the \emph{potential decrease before padding} and the \emph{actual potential decrease} by
\begin{equation}\label{eq:before-padding}
 \Pi^\circ_S:=\Psi(M(S),d)-\sum_i\Psi(b_i,d_i),\qquad
 \Pi_S:=\Psi(M(S),d)-\sum_i\Psi(M(Y_i),d_i).
\end{equation}
Expanding \eqref{eq:psi} and using $\sum_i b_i=M(S)$ gives
\begin{equation}\label{eq:core-credit}
 \Pi^\circ_S
 =\sum_{i:b_i>0}b_i\log(M(S)/b_i)+\sum_{i:d_i=d/2}b_i.
\end{equation}
We call the first sum the \emph{size credit} $\kappa_S$ and the second the \emph{diameter credit} of the core partition. The size credit rewards children with small cores, and the diameter credit rewards halving the parameter. The next lemma bounds how much padding reduces this decrease.

\begin{lemma}[Padding cost]\label{lem:padding-cost}
Let $(Y_i^\circ)$ be a core partition of a cover $(Y_i)$ of a nonempty set $S\subseteq X$, with duplicated mass $D_S$. Let $W/2\le d\le M_0^2W/2$ and $d_i\in\{d,d/2\}$. Then
\begin{equation}\label{eq:core-padding}
 \Pi^\circ_S-\Pi_S\le6LD_S.
\end{equation}
\end{lemma}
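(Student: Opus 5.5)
The plan is a termwise mean-value estimate. Subtracting the two expressions in \eqref{eq:before-padding}, the common term $\Psi(M(S),d)$ cancels, so
\[
 \Pi^\circ_S-\Pi_S=\sum_i\bigl(\Psi(M(Y_i),d_i)-\Psi(b_i,d_i)\bigr),
\]
and it suffices to show that each summand is at most $6L\,M(\partial Y_i)$. Summing over $i$ then gives $6LD_S$ by \cref{def:core-partition}.

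Fix $i$ and the parameter $d_i$, and consider $f(z)=\Psi(z,d_i)=z(\log z+\log(4d_i/W))$ on $[b_i,M(Y_i)]$. With $\log$ base $2$ and $f(0)=0$, the function $f$ is continuous on $[0,\infty)$, since $z\log z\to0$ as $z\to0$. It is differentiable for $z>0$, with
\[
 f'(z)=\log z+\log(4d_i/W)+1/\ln 2 .
\]
By the mean value theorem (the left endpoint may be $b_i=0$, which continuity handles),
\[
 f(M(Y_i))-f(b_i)\le (M(Y_i)-b_i)\cdot\sup_{0<z\le M(Y_i)}f'(z),
\]
and $M(Y_i)-b_i=M(\partial Y_i)$. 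Only an upper bound on $f'$ is needed, so negative values of $f'$ for tiny $z$ cause no harm.

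It remains to bound the derivative using the global parameters in \eqref{eq:global}.
\begin{itemize}[leftmargin=*,itemsep=3pt,topsep=5pt]
\item Since $Y_i\subseteq S\subseteq V$, we have $z\le M(Y_i)\le M_0$, so $\log z\le\log M_0\le L$.
\item Since $d_i\le d\le M_0^2W/2$, we have $4d_i/W\le 2M_0^2$, so $\log(4d_i/W)\le 2L+1$.
\item Since $d_i\ge d/2\ge W/4$, the logarithm $\log(4d_i/W)$ is nonnegative. This bound is not needed here, but it confirms that the expression is well defined.
\end{itemize}
Hence $f'(z)\le 3L+1+1/\ln2<3L+3\le 6L$, using $L\ge1$, which holds because $M_0\ge2$ for any nonempty instance with $\lambda\ge1$.

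The only mildly delicate point is the endpoint $b_i=0$, where the convention $\Psi(0,d)=0$ must match the continuous extension of $z\log z$. The constant $6$ also has to absorb the additive $1/\ln2$ and the $+1$ coming from $\log(2M_0^2)$. Both are routine, so I do not expect a real obstacle.
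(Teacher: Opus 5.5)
Your argument is correct and is essentially the paper's. The paper writes the difference exactly as $(b'-b)\log(4d'/W)+(b'-b)\log b'+b\log(b'/b)$ and bounds these terms per unit of $b'-b$ by $1+2L$, $L$ and $1/\ln2\le2$, which are precisely your three bounds on $f'$; you obtain the same estimate through the mean value theorem rather than through this identity, and a side benefit of the paper's identity is that it also shows $\Psi$ is nondecreasing in mass, which is used later but is not part of this lemma.
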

\begin{proof}
By \eqref{eq:before-padding},
\[
 \Pi^\circ_S-\Pi_S=\sum_i(\Psi(M(Y_i),d_i)-\Psi(b_i,d_i)).
\]
Fix a child $i$ and write $b=b_i$, $b'=M(Y_i)$, and $d'=d_i$, so that $0\le b\le b'\le M_0$ are integers and $W/4\le d'\le M_0^2W/2$. Expanding \eqref{eq:psi}, with the convention $0\log0=0$,
\[
 \Psi(b',d')-\Psi(b,d')
 =(b'-b)\log(4d'/W)+(b'-b)\log b'+b\log(b'/b).
\]
Every term is nonnegative. Moreover $\log(4d'/W)\le1+2\log M_0\le1+2L$, $\log b'\le L$, and $b\log(b'/b)\le(b'-b)/\ln2\le2(b'-b)$. Hence, since $L\ge1$,
\begin{equation}\label{eq:padding-cost}
 0\le\Psi(b',d')-\Psi(b,d')\le(3+3L)(b'-b)\le6L(b'-b).
\end{equation}
Summing over $i$ and using $\sum_i(M(Y_i)-b_i)=D_S$ proves \eqref{eq:core-padding}. The lower bound in \eqref{eq:padding-cost} also shows that, in this parameter range, $\Psi$ is nonnegative and nondecreasing in mass.
\end{proof}

\subsection{Local decomposition}\label{sec:local}

Set $\theta=1/24$. Shell masses will be bounded by $\theta/L$ times the credit they are charged to; with \cref{lem:padding-cost}, this choice ensures that padding consumes at most half of the potential decrease.

Set $\Delta=d/32$, so that sets within distance $8\Delta$ both to and from a common center have weak diameter at most $d/2$. We call a set $T\subseteq X$ \emph{compact around $s\in X$} if
\begin{equation}\label{eq:compact-target}
 \dist_H(s,v)\le2\Delta,\qquad \dist_H(v,s)\le2\Delta,
  \qquad \forall v\in T.
\end{equation}
In particular, $\wdiam_H(T)\le4\Delta$.

\begin{samepage}
\begin{lemma}[Local decomposition]\label{lem:local}
For a nonempty set $S\subseteq X$, an ordered cover $\mathcal Y=(Y_i)$ of $S$ with incidence mass $\OO(M(S))$ can be constructed in $\OO(M(S))$ time in one of the following forms:
\begin{itemize}[leftmargin=*,itemsep=3pt,topsep=5pt]
\item In the light case, every child retains $d$ and has mass at most $\rho M(S)$, for a constant $0<\rho<1$.
\item In the heavy case, the two side children retain $d$ and have mass at most $(1-\beta)M(S)$, for $\beta=1/8$. The central child receives $d/2$ and is compact around a center $s\in S$, with distances measured in $H[S]$.
\end{itemize}
The cover $\mathcal Y$ is obtained from $(S)$ by refinements, each replacing a member $A$ by a $\delta_0$-padded split of $H[A]$, where $\delta_0=\Theta(d/(L\log L))$ does not depend on $S$.

Write $d_i$ for the parameter of $Y_i$ and $D_S$ for the duplicated mass. The actual potential decrease and duplicated mass satisfy
\[
 \Pi_S\ge M(S)/16,\qquad
 D_S\le\frac{2\theta}{L}\Pi_S.
\]
\end{lemma}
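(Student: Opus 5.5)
Our plan is a single ball-growing step in $H[S]$ with uniform padding width $\delta_0=\Delta/(c_0L\log L)$, for a large constant $c_0$, followed by a light/heavy case split. Write $\mu=M(S)$. Pick any $s\in S$ and run Dijkstra in both directions from $s$ in $H[S]$, truncated at radius $2\Delta$. This costs $\OO(\mu)$ with the integer priority queue, since $\omega(v)\ge\deg_G(v)+\lambda$.

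\textbf{Light case.} Suppose some direction $\sigma$ has $M(B^\sigma_{H[S]}(s,\Delta))\le\mu/2$. We then grow a ball in that direction and choose the radius $r\in[0,\Delta]$ by a standard region-growing argument, adapted to the size-sensitive credit. We look for a radius at which the shell $B^\sigma(s,r+\delta_0)\setminus B^\sigma(s,r)$ has mass at most
\[
(\theta/L)\bigl(b\log(\mu/b)+\min(b,\mu-b)\bigr),
\]
where $b$ is the core mass. The interval $[0,\Delta]$ contains $\Delta/\delta_0=c_0L\log L$ disjoint shells. We argue by contradiction: if every shell were too heavy, then, because the logarithmic credit grows once $b$ is below $\mu/2$, the core mass would have to grow by a factor $1+\Omega(1/(L\log L))$ per step over $\Theta(L\log L)$ steps. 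It would therefore exceed $\mu/2$, which is impossible. This forced growth is the main obstacle. The paper's choice $\delta_0=\Theta(d/(L\log L))$ rather than $d/L$ points to an argument that halves the core mass over $\Theta(L)$ phases of $\Theta(\log L)$ shells each, so we would first try to make that phase bookkeeping work.

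We then recurse on $S\setminus B^\circ$ with the same procedure, peeling balls until the remainder is small. The resulting refinements are $\delta_0$-padded splits $(B,S\setminus B^\circ)$ or their reverses. The children are the pads $B$ and the remainder, and the cores form a core partition. Each core has mass at most $\mu/2$, and we take the last remainder to have mass at most $\rho\mu$. The charging then gives $D_S\le(\theta/L)\kappa_S$, and $\kappa_S\ge\Omega(\mu)$ because every core is at most $\rho\mu$. To keep the total time $\OO(\mu)$ across repeated peels, we charge each search to the mass it removes, running the two directions in parallel and stopping at the cheaper one.

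\textbf{Heavy case.} Suppose both balls of radius $\Delta$ around $s$ have mass exceeding $\mu/2$. Let $T$ be their intersection; it is compact around $s$. We take an outward padded ball of radius $r\in[\Delta,2\Delta-\delta_0]$, then an inward padded ball inside its pad, with radii again chosen by the region-growing argument. This yields a central child (the doubly padded piece), which is compact around $s$ and receives $d/2$. It also yields two side children (the complements of the cores), each retaining $d$. Each side child misses a core of mass at least $\mu/2-\text{shell}$, so it has mass at most $(1-\beta)\mu$ with $\beta=1/8$. The central core earns diameter credit equal to its mass.

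\textbf{Accounting.} In both cases we obtain $\Pi^\circ_S\ge\mu/8$: in the light case from the size credit, and in the heavy case from the diameter credit plus the size credit of the sides. The charging gives $D_S\le(\theta/L)\Pi^\circ_S$. Lemma~\ref{lem:padding-cost} then yields
\[
\Pi_S\ge\Pi^\circ_S-6LD_S\ge(1-6\theta)\Pi^\circ_S=\tfrac34\Pi^\circ_S.
\]
This gives both $\Pi_S\ge\mu/16$ and $D_S\le(\theta/L)\cdot\tfrac43\Pi_S\le(2\theta/L)\Pi_S$. Finally, the incidence mass is $\mu+D_S=\OO(\mu)$.
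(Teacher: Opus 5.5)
There is a genuine gap at the step you flag as the main obstacle, and the argument you sketch for it does not work. Your outline otherwise matches the paper: peel light padded balls under a size-sensitive shell test, fall back to a two-ball heavy split, and finish with \cref{lem:padding-cost}.

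A per-shell growth factor $1+\Omega(1/(L\log L))$ over $\Theta(L\log L)$ shells only guarantees constant-factor growth. The initial core can have mass as small as $\omega(s)\ll\mu$, so nothing forces the ball past $\mu/2$. Nor can phase bookkeeping use $\Theta(\log L)$ shells per doubling: when $\log(\mu/b)\approx x$, a doubling can take about $L/(\theta x)$ failed shells, which is $\Theta(L)$ for $x=\OO(1)$. The missing idea is to track $x_t=\log(\mu/b_t)$ rather than $b_t$. A failed test gives $b_{t+1}>b_t(1+\theta x_t/L)$. Hence $x_{t+1}<x_t-\log(1+\theta x_t/L)\le(1-\theta/L)x_t$, using $x_t\le L$. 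So $x$ contracts geometrically from at most $L$, and after $\OO((L/\theta)\log L)$ failures it falls below $1$, i.e.\ $b>\mu/2$. This is what fixes $N_0=\lceil16L\lceil\log L\rceil/\theta\rceil$ and $\delta_0=\Delta/N_0$ in the paper, which runs the same argument with $\lfloor x_t\rfloor$ and heavy threshold $\mu/8$.

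The time bound and case logic also need repair. Your case split requires the radius-$\Delta$ ball masses around $s$. Computing them costs up to $\Theta(\mu)$ per peel even when the accepted core is tiny, and charging to removed mass does not cover this pretest. The paper never pretests. Each ball grows from radius $0$ and stops at the first acceptable shell (explored mass at most $(1+\theta)$ times the core) or as soon as it is heavy. The two directions alternate and a heavy one is paused. The first light result is kept, and the other search is discarded; its work is bounded by the winner's. Only if both directions at one vertex become heavy does the paper build the heavy split, on all of $S$ with balls in $H[S]$, discarding earlier pads. This is valid because a ball in an induced subgraph lies in the corresponding ball of $H[S]$.

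Your proposal never says what happens when a heavy center appears after some peels. Rerunning the same procedure on the remainder gives a mixed cover that matches neither stated form. In the heavy split, grow the inward ball in $H[S]$ too and intersect it with $B_+$, as in \cref{lem:heavy-split}. An inward ball grown inside $H[B_+]$ need not contain the heavy inward ball, and then $Y_1$ need not shrink.

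Finally, since $b\le\mu/2$ gives $b\le b\log(\mu/b)$, your extra term $\min(b,\mu-b)$ can double the shell allowance. So the claimed $D_S\le(\theta/L)\kappa_S$ does not follow, and the generic estimate yields only $D_S\le(4\theta/L)\Pi_S$. Drop the term; it is unnecessary.
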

\end{samepage}

Two heavy balls around the same center give the central child and smaller side children through the following split.
\begin{lemma}[Heavy split]\label{lem:heavy-split}
Let $S\subseteq X$, and let $(B_+^\circ,B_+)$ and $(B_-^\circ,B_-)$ be outward and inward $\delta$-padded balls in $H[S]$ with a common center $s$. Suppose both cores have mass at least $\beta M(S)$, for some $0<\beta<1$, and both pads have radius at most $r_{\max}$. Then the ordered cover
\begin{equation}\label{eq:heavy-children}
 (Y_1,Y_0,Y_2)=(B_+\setminus B_-^\circ,\ B_+\cap B_-,\ S\setminus B_+^\circ)
\end{equation}
of $S$ can be constructed in $\OO(M(S))$ time and satisfies the following:
\begin{enumerate}[label=(\roman*)]
\item $M(Y_1)\le(1-\beta)M(S)$ and $M(Y_2)\le(1-\beta)M(S)$;
\item every vertex of $Y_0$ has distance at most $r_{\max}$ both to and from $s$ in $H[S]$;
\item the cover is obtained from $(S)$ by two refinements, each by a $\delta$-padded split;
\item its duplicated mass satisfies $D_S\le M(\partial B_+)+M(\partial B_-)$.
\end{enumerate}
\end{lemma}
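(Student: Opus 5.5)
The plan is to check the four claims directly from the definitions of padded balls. Write $B_+^\circ=B^+_{H[S]}(s,r_+)$, $B_+=B^+_{H[S]}(s,r_++\delta)$, and similarly $B_-^\circ$, $B_-$ with radius $r_-$; the hypothesis is $r_\pm+\delta\le r_{\max}$.

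First, coverage. Take $v\in S$. If $v\notin B_+^\circ$, then $v\in Y_2$. If $v\in B_+^\circ\subseteq B_+$, then either $v\in B_-$, so $v\in Y_0$, or $v\notin B_-\supseteq B_-^\circ$, so $v\in Y_1$. Hence the three sets cover $S$.

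Claim (i) follows from disjointness:
\begin{itemize}
\item $Y_1$ avoids $B_-^\circ$, so $M(Y_1)\le M(S)-M(B_-^\circ)\le(1-\beta)M(S)$.
\item $Y_2$ avoids $B_+^\circ$, so $M(Y_2)\le(1-\beta)M(S)$ in the same way.
\end{itemize}
Claim (ii) is immediate, since $Y_0\subseteq B_+\cap B_-$ and every vertex there has $\dist_{H[S]}(s,v)\le r_++\delta\le r_{\max}$ and $\dist_{H[S]}(v,s)\le r_{\max}$.

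For the running time, compute both pads by two truncated Dijkstra searches in $H[S]$ (forward and reversed). By the integer priority queue remark, this costs $\OO(M(S))$, and forming the three sets takes linear time in the masses.

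Claim (iii) is the structural step. I will realize the cover by two refinements.
\begin{itemize}
\item \textbf{First refinement.} Replace $S$ by the split $(B_+,S\setminus B_+^\circ)$. This is a $\delta$-padded split of $H[S]$ by the observation after \cref{def:padded-ball}.
\item \textbf{Second refinement.} Replace $A=B_+$ by the split $(A\setminus B_-^\circ,\ A\cap B_-)$. The new members are inserted in place of $A$, which yields exactly the order $(Y_1,Y_0,Y_2)$.
\end{itemize}
The main point to verify is that the second split is $\delta$-padded in $H[A]$, not merely in $H[S]$. The set differences are $(A\setminus B_-^\circ)\setminus(A\cap B_-)=A\setminus B_-$ and $(A\cap B_-)\setminus(A\setminus B_-^\circ)=A\cap B_-^\circ$. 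Take $u\in A\setminus B_-$ and $v\in A\cap B_-^\circ$. We have $\dist_{H[S]}(u,s)>r_-+\delta$ and $\dist_{H[S]}(v,s)\le r_-$, so the triangle inequality in $H[S]$ gives $\dist_{H[S]}(u,v)>\delta$. Distances in the induced subgraph $H[A]$ are at least those in $H[S]$, so $\dist_{H[A]}(u,v)\ge\dist_{H[S]}(u,v)>\delta$. The padding therefore transfers, and this monotonicity under restriction is the key observation.

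Claim (iv) comes from exhibiting a core partition. Choose the cores as follows:
\begin{itemize}
\item $Y_2^\circ=S\setminus B_+^\circ$, which is all of $Y_2$;
\item $Y_1^\circ=B_+^\circ\setminus B_-^\circ$;
\item $Y_0^\circ=B_+^\circ\cap B_-^\circ$.
\end{itemize}
These partition $S$, and each is contained in the corresponding child. The shells are then:
\begin{itemize}
\item $\partial Y_2=\emptyset$;
\item $\partial Y_1=(B_+\setminus B_+^\circ)\setminus B_-^\circ\subseteq\partial B_+$;
\item $\partial Y_0=(B_+\cap B_-)\setminus(B_+^\circ\cap B_-^\circ)\subseteq\partial B_+\cup\partial B_-$.
\end{itemize}
Summing the shells naively would count $\partial B_+$ twice. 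To avoid this, note that $\partial Y_1$ and $\partial Y_0$ are disjoint, since $Y_1\cap Y_0=B_+\cap(B_-\setminus B_-^\circ)$ is excluded from $Y_1^\circ$ and $Y_0^\circ$ in a way that needs a check. More simply, $D_S=\sum_i M(Y_i)-M(S)$ counts each vertex by its multiplicity minus one, so I bound multiplicities directly:
\begin{itemize}
\item A vertex of $B_+^\circ\cap B_-^\circ$ lies only in $Y_0$, because it is excluded from $Y_1$ and $Y_2$.
\item A vertex of $B_+^\circ\setminus B_-$ lies only in $Y_1$.
\item A vertex with multiplicity at least $2$ must therefore lie in $\partial B_+\cup\partial B_-$. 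Its multiplicity is at most $2$: it cannot be in all three sets, since $Y_1\cap Y_0\subseteq B_+^\circ$ would force it out of $Y_2$. A vertex of $\partial B_+\cap\partial B_-$ lies in $Y_0$ and $Y_2$ only, so it is still counted once.
\end{itemize}
Hence $D_S\le M(\partial B_+\cup\partial B_-)\le M(\partial B_+)+M(\partial B_-)$.
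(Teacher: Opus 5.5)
Your coverage argument, claims (i)--(iii), and the running-time bound are correct and follow the paper's proof. The paper uses the same two refinements, first $(B_+,S\setminus B_+^\circ)$ and then $(B_+\setminus B_-^\circ,B_+\cap B_-)$, together with the fact that restricting to $H[A]$ can only increase distances. Your explicit triangle-inequality check of the second split makes this precise.

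Your proof of (iv), however, rests on false claims. A vertex $v\in\partial B_+\cap\partial B_-$ lies in all three children:
\begin{itemize}
\item $v\in B_+$ and $v\notin B_-^\circ$ put it in $Y_1$;
\item $v\in B_+\cap B_-$ puts it in $Y_0$;
\item $v\notin B_+^\circ$ puts it in $Y_2$.
\end{itemize}
The inclusion $Y_1\cap Y_0\subseteq B_+^\circ$ that you invoke is wrong. In fact $Y_1\cap Y_0=B_+\cap\partial B_-$, which meets $\partial B_+$ whenever the two shells intersect, and nothing in the hypotheses prevents that. For the same reason $\partial Y_1$ and $\partial Y_0$ need not be disjoint.

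Hence multiplicities can reach $3$, and your intermediate bound $D_S\le M(\partial B_+\cup\partial B_-)$ fails in general: such a $v$ contributes $2\omega(v)$ to $D_S$ but only $\omega(v)$ to the right-hand side. The stated bound survives with a correct count. Multiplicity $3$ occurs exactly on $\partial B_+\cap\partial B_-$, and multiplicity $2$ occurs only on $\partial B_+\cup\partial B_-$. So each vertex's multiplicity minus one is at most the number of shells among $\partial B_+,\partial B_-$ containing it, which gives $D_S\le M(\partial B_+)+M(\partial B_-)$.

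The paper obtains this more directly from the refinement structure you already set up in (iii). The first split duplicates exactly $B_+\cap(S\setminus B_+^\circ)=\partial B_+$. The second duplicates exactly $(B_+\setminus B_-^\circ)\cap(B_+\cap B_-)=B_+\cap\partial B_-$. Therefore $D_S=M(\partial B_+)+M(B_+\cap\partial B_-)$.
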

\begin{proof}
The sets cover $S$, since $(B_+\setminus B_-^\circ)\cup(B_+\cap B_-)=B_+$. The side child $Y_1$ is disjoint from $B_-^\circ$ and $Y_2$ is disjoint from $B_+^\circ$, which gives (i). Every vertex of $Y_0$ lies in both pads, which gives (ii). For (iii), refine $(S)$ into $(B_+,S\setminus B_+^\circ)$, the split given by the outward ball, and then refine $B_+$ into $(B_+\setminus B_-^\circ,B_+\cap B_-)$, the split given by the inward ball restricted to $B_+$. Restriction can only increase distances, so both splits are $\delta$-padded. For (iv), subtracting $M(S)$ from the cover's incidence mass gives
\begin{equation}\label{eq:heavy-dup}
 D_S=M(B_+\setminus B_+^\circ)+M(B_+\cap(B_-\setminus B_-^\circ))
 \le M(\partial B_+)+M(\partial B_-).
\end{equation}
All sets can be formed by linear scans.
\end{proof}

\begin{proof}[Proof of \cref{lem:local}]
Fix a current vertex set $U\subseteq S$, a center $s\in U$, and a direction $\sigma\in\{+,-\}$. We seek either a heavy ball, of mass at least $\beta M(S)=M(S)/8$, or a $\delta_0$-padded ball $(B^\circ,B)$ in $H[U]$ with a light core $M(B^\circ)<\beta M(S)$ and a small shell.

Start with the radius-zero ball around $s$, including any vertices at zero distance. If the current ball is heavy, return it. Otherwise, enlarge its radius by $\delta_0$ and inspect the newly added shell. Accept the old ball as the core $B^\circ$ and the enlarged ball as its pad $B$ if
\begin{equation}\label{eq:adaptive-shell}
 M(\partial B)\le\frac{\theta}{L}M(B^\circ)
 \lfloor\log(M(S)/M(B^\circ))\rfloor.
\end{equation}
If the test fails, make the enlarged ball the current ball and repeat. An accepted shell has mass at most $\theta M(B^\circ)\le M(B^\circ)$.

Repeated failures force the ball to become heavy. To bound their number, let $b_t$ be the mass of the current ball after $t$ failed enlargements, starting with $t=0$, and set $x_t=\log(M(S)/b_t)$. Until a heavy ball is found, $b_t<M(S)/8$, so $3<x_t\le L$ and $\lfloor x_t\rfloor\ge x_t/2$. A failed test implies
\[
 x_{t+1}<x_t-\log(1+\theta\lfloor x_t\rfloor/L)
 \le\left(1-\frac{\theta}{4L}\right)x_t.
\]
We use $\log(1+u)\ge u/2$ for $0\le u\le1$. To fit all trials within radius $\Delta$, choose
\begin{equation}\label{eq:micro-width}
 N_0=\lceil16L\lceil\log L\rceil/\theta\rceil,
 \qquad \delta_0=\frac{\Delta}{N_0}.
\end{equation}
After $N_0$ failed trials the contraction would give $x_{N_0}<3$, so the search succeeds within radius $\Delta$. An accepted core $B^\circ$ and pad $B$ satisfy $M(B)\le(1+\theta)M(B^\circ)$ by \eqref{eq:adaptive-shell}.

To obtain a decomposition, we collect light balls until their cores cover a quarter of $S$, or find a center with heavy balls in both directions. Maintain two unions $U^+,U^-$ of accepted cores. While both have mass less than $M(S)/4$, choose a vertex outside their union. Run the outward search from this vertex in $H[S\setminus U^+]$ and the inward search in $H[S\setminus U^-]$, alternating elementary operations so that the accepted core can pay for both searches. Accept the first light result into its corresponding union and discard the other search. If a search becomes heavy, pause it while the other continues. If both become heavy, use the heavy construction below.

Suppose the outward union first reaches $M(S)/4$. Retain only its accepted pads, in order, followed by the set $S\setminus U^+$. Each accepted core has mass less than $\beta M(S)$, so $U^+$ has mass in $[M(S)/4,3M(S)/8)$ and $S\setminus U^+$ has mass in $(5M(S)/8,3M(S)/4]$. Since $\theta<1$, every pad and $S\setminus U^+$ have mass at most $\rho M(S)$, for example with $\rho=7/8$. The two unions may overlap; only the balls of the winning orientation are retained. If the inward union wins, the child order is reversed. Each accepted ball refines the entry consisting of the vertices not yet in earlier cores by a $\delta_0$-padded split.

If both searches become heavy, then the in-ball and out-ball of radius $\Delta$ around the chosen vertex $s$ in $H[S]$ have mass at least $\beta M(S)$, since balls in an induced subgraph are contained in the corresponding balls of $H[S]$. Set
\[
 N_h=\lceil16L/\theta\rceil,\qquad \delta_h=\Delta/N_h.
\]
Partition $[\Delta,2\Delta]$ into $N_h$ shells of width $\delta_h$ in each direction. These shells are disjoint within a direction and have total mass at most $M(S)$. At least one has mass at most $M(S)/N_h\le\theta M(S)/(16L)$. Choose such a shell in each direction and write the outward and inward $\delta_h$-padded balls as $(B_+^\circ,B_+)$ and $(B_-^\circ,B_-)$, respectively. The pad radii are at most $2\Delta$.

Apply \cref{lem:heavy-split} to these two balls. Its side children have mass at most $(1-\beta)M(S)$, and its central child is compact around $s$, with weak diameter at most $4\Delta<d/2$. The central child receives parameter $d/2$ and the side children retain $d$. The split width is at least $\delta_0$, since $\delta_h\ge\delta_0$.

In both cases, every refinement is a $\delta_0$-padded split. The light cover has incidence mass at most $(1+\theta)M(S)$, since its cores and $S\setminus U^\pm$ partition $S$. The heavy cover consists of three subsets of $S$ and has incidence mass at most $3M(S)$.

Monotone Dijkstra exploration with lazy initialization takes time linear in the explored mass. Shell tests add no larger cost: a failed shell contains a new vertex, and an empty shell succeeds. When a light search finishes, the discarded competing search has used at most the same number of operations. The accepted pad has mass at most $(1+\theta)$ times its core mass, so that core pays for both searches. Cores are disjoint within each orientation, giving total work $\OO(M(S))$. A pair of heavy searches and the subsequent scans for the two heavy shells also cost $\OO(M(S))$.

For the light output, take the core partition consisting of the accepted cores and $S\setminus U^\pm$. For the heavy output, choose any core partition of the three children. Let $\Pi^\circ_S$ be the potential decrease before padding, given by \eqref{eq:core-credit}. Every child retaining $d$ has mass at most $7M(S)/8$ in both cases, so its core contributes at least $b_i\log(8/7)\ge b_i/8$ to the size credit. A core whose child receives $d/2$ contributes at least $b_i$ to the diameter credit. Since the core masses sum to $M(S)$, we have $\Pi^\circ_S\ge M(S)/8$. In the light case, the shell tests give
\[
 D_S\le\frac{\theta}{L}\sum_i b_i\log(M(S)/b_i)
 \le\frac{\theta}{L}\Pi^\circ_S.
\]
In the heavy case, \cref{lem:heavy-split} bounds $D_S$ by the sum of the two shell masses, so $D_S\le\theta M(S)/(8L)\le\theta\Pi^\circ_S/L$. Both cases therefore satisfy
\begin{equation}\label{eq:local-dup}
 D_S\le\frac{\theta}{L}\Pi^\circ_S,\qquad \Pi^\circ_S\ge M(S)/8.
\end{equation}
By \cref{lem:padding-cost}, padding consumes $\Pi^\circ_S-\Pi_S\le6LD_S\le6\theta\Pi^\circ_S=\Pi^\circ_S/4$, so
\begin{equation}\label{eq:local-drop}
 \Pi_S\ge\Pi^\circ_S/2.
\end{equation}
Consequently $\Pi_S\ge M(S)/16$ and $D_S\le2\theta\Pi_S/L$.
\end{proof}

\subsection{Residual cover}\label{sec:residual-cover}

Given a set $R\subseteq X$, we seek either a cover of $R$ by pieces of mass at most a cutoff $Z$, or a subset of mass greater than $Z$ that is compact around a center. Let $\tau$ be the largest power of two at most $(\log L)/2$, so that
\begin{equation}\label{eq:tau-range}
 \tfrac14\log L<\tau\le\tfrac12\log L,
\end{equation}
and assume that $L$ is large enough that $\tau\ge4$. Set
\begin{equation}\label{eq:residual-cutoff}
 Z=M/2^\tau.
\end{equation}

\begin{samepage}
\begin{lemma}[Residual cover or center]\label{lem:residual}
For any set $R\subseteq X$, there is an algorithm $\Residual(R)$ that takes $\OO(M(R)\tau)$ time and returns one of the following:
\begin{itemize}[leftmargin=*,itemsep=3pt,topsep=5pt]
\item A set $T\subseteq R$ of mass greater than $Z$, compact around a center $s\in R$.
\item An ordered cover $\mathcal Y_{\res}=(Y_t)$ of $R$ whose members have mass at most $Z$, with child parameters $d_t$ and total incidence mass $\OO(M(R))$. Each child retains parameter $d$, or receives $d/2$ and has weak diameter at most $d/2$ in $H$. Every walk $P$ in $H[R]$ satisfies
\begin{equation}\label{eq:residual-ascents}
 \ell(P)\le d\quad\Longrightarrow\quad a_{\mathcal Y_{\res}}(P)=\OO(L\tau).
\end{equation}
The duplicated mass $D_{\res}$ and actual potential decrease $\Pi_{\res}:=\Psi(M(R),d)-\sum_t\Psi(M(Y_t),d_t)$ satisfy
\begin{equation}\label{eq:residual-dup}
 D_{\res}\le\frac{2\theta}{L}\Pi_{\res}.
\end{equation}
\end{itemize}
\end{lemma}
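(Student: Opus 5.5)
We build $\Residual(R)$ by applying \cref{lem:local} recursively, as a tree of depth $\OO(\tau)$. Maintain a list of pending members, initially $(R)$, and process each pending set $A$ with $M(A)>Z$ as follows; members of mass at most $Z$ are final leaves.

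On such an $A$, run \cref{lem:local}. In the light case, replace $A$ by the light cover, in place. Every child has mass at most $\rho M(A)$ and keeps parameter $d$, so it is again pending if its mass exceeds $Z$. In the heavy case, replace $A$ by $(Y_1,Y_0,Y_2)$. The side children have mass at most $(1-\beta)M(A)$ and remain pending. The central child $Y_0$ is compact around $s$ in $H[A]$, hence also in $H$, since distances in $H$ are at most those in $H[A]$. If $M(Y_0)>Z$, stop and return $T=Y_0$ with center $s$. Otherwise, keep $Y_0$ as a final member with parameter $d/2$; its weak diameter in $H$ is at most $4\Delta\le d/2$.

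A final central child is never refined again. Every other pending member loses a constant factor of mass per level. Starting from mass at most $M$ and stopping at $Z=M/2^\tau$, the depth is therefore at most $\OO(\tau)$. Members on a single level are disjoint up to duplication, and incidence mass multiplies by at most $(1+\theta)$ or so per level. This bound is the first subtle point: a heavy split can triple incidence mass, so I would use the duplicated-mass bound $D_A\le 2\theta\Pi_A/L$ rather than the crude factor $3$. Summing $\Pi$ along the tree telescopes to $\Pi_{\res}$, since a refinement's decrease adds to the parent decrease. This gives total duplicated mass at most $\frac{2\theta}{L}\Pi_{\res}$, which is \eqref{eq:residual-dup}. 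The incidence mass is then $M(R)+D_{\res}$. Since $\Pi_{\res}\le\Psi(M(R),d)=\OO(M(R)L)$, this is $\OO(M(R))$. Per level, \cref{lem:local} costs linear time in that level's incidence mass, so the total time is $\OO(M(R)\tau)$.

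For the ascent bound, every refinement is a $\delta_0$-padded split with the same $\delta_0=\Theta(d/(L\log L))$. The refinements are inserted in place, and a padded split of $H[A]$ stays padded in the induced subgraphs used later. \Cref{lem:composition} then gives $a(P)\le \ell(P)/\delta_0=\OO(L\log L)=\OO(L\tau)$ for $\ell(P)\le d$, using $\tau=\Theta(\log L)$ from \eqref{eq:tau-range}. This gives \eqref{eq:residual-ascents}.

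The main obstacle is the accounting: the child parameters and core partitions must compose across levels so that the per-level inequality $D_A\le\frac{2\theta}{L}\Pi_A$ sums to \eqref{eq:residual-dup}. I would define the global $\Pi_{\res}$ as the sum of local $\Pi_A$ over internal nodes; this works because $\Psi(M(A),d_A)$ appears once with a plus sign and once with a minus sign. I would also check that duplicated masses add, which holds because the duplicated mass of the final cover equals the sum over refinements of the added incidence mass.
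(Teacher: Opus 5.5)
Your proposal is correct and follows essentially the same route as the paper. Both recursively apply \cref{lem:local} to depth $\OO(\tau)$, use \cref{lem:composition} with the common width $\delta_0$ for the ascent bound, and telescope the per-division bounds $D_A\le\frac{2\theta}{L}\Pi_A$ into $D_{\res}\le\frac{2\theta}{L}\Pi_{\res}$, which gives linear incidence mass and $\OO(M(R)\tau)$ work. Your estimate $\Pi_{\res}\le\Psi(M(R),d)=\OO(M(R)L)$ is a valid substitute for the paper's core-credit bound. The one point you leave implicit is the running time when a center is returned early. The paper handles this case by analyzing the completed refinement, in which central children are left unexpanded, and your telescoping argument applies verbatim to that tree.
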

\end{samepage}

We build this cover by repeatedly dividing members of mass greater than $Z$ using \cref{lem:local}. Each division produces children of smaller mass, except possibly a central child of small weak diameter; if that child still has mass greater than $Z$, it supplies the set and center in the first outcome.

To track duplication across repeated divisions, consider a sequence of refinements from $(S,d)$ to child instances $(Y_t,d_t)$. Let $D_A$ and $\Pi_A$ be the duplicated mass and actual potential decrease of each replaced occurrence $A$ with respect to its replacement cover. Only that occurrence and its replacement children change the cover's mass and potential, so
\begin{align}
 \sum_t M(Y_t)-M(S)&=\sum_{A\text{ replaced}}D_A,\nonumber\\
 \Psi(M(S),d)-\sum_t\Psi(M(Y_t),d_t)
 &=\sum_{A\text{ replaced}}\Pi_A.\label{eq:refinement-accounting}
\end{align}
Occurrences are counted separately even when they have the same vertex set.

\begin{proof}[Proof of \cref{lem:residual}]
If $R$ is empty, return the empty cover. Otherwise start with an ordered list containing just $(R,d)$. Refine any member of mass greater than $Z$ using \cref{lem:local}, discarding empty children.

If a heavy case produces a central child of mass greater than $Z$, return that child and its center immediately. Central children of mass at most $Z$ receive parameter $d/2$; all other children retain $d$. Continue replacing oversized members in depth-first order. If no center is returned, the resulting list is an ordered cover whose members all have mass at most $Z$.

To bound the work even when a center is returned early, consider the completed refinement, which continues refining the other children but leaves every central child unexpanded. The actual execution performs only a prefix of this construction. The leaves of its refinement tree are central children and sets of mass at most $Z$. Every expanded child has mass at most $\rho$ times its parent's mass, and expanded masses exceed $Z$, so the depth is $\OO(\log(M/Z))=\OO(\tau)$. All expanded nodes retain the same $d$ and use the same $\Delta$. By \cref{lem:local}, each division consists of refinements by $\delta_0$-padded splits, with the common width $\delta_0$. Performing the divisions in sequence therefore obtains the ordered leaf cover from $(R)$ by $\delta_0$-padded splits, and \cref{lem:composition} bounds its ascent number by $\lfloor\ell(P)/\delta_0\rfloor$. For $\ell(P)\le d$, this is $\OO(L\log L)=\OO(L\tau)$, proving \eqref{eq:residual-ascents}. Every central child is compact in an induced subgraph of $H$, hence in $H$ itself, and has weak diameter at most $4\Delta<d/2$. A leaf above the cutoff supplies the required set and center.

A linear incidence bound for each division does not imply a linear bound for the final cover. We therefore use the accounting guarantee of \cref{lem:local} to charge each additional copy to the actual potential decrease of the division that creates it.

Let $Y_t$, with parameters $d_t$, be the leaves of the completed refinement, with duplicated mass $D_{\res}$ and actual potential decrease $\Pi_{\res}$. For each internal occurrence $A$, write $D_A$ and $\Pi_A$ for its duplicated mass and actual potential decrease. \Cref{lem:local} and the refinement identities \eqref{eq:refinement-accounting} give
\[
 D_{\res}=\sum_{A\text{ internal}}D_A
 \le\frac{2\theta}{L}\sum_{A\text{ internal}}\Pi_A
 =\frac{2\theta}{L}\Pi_{\res}.
\]
This proves \eqref{eq:residual-dup} whenever the routine returns a cover, and gives the same bound for the completed refinement used to analyze an early return.

Choose any core partition of the completed cover, with masses $b_t$. Since $\Psi$ is nondecreasing in mass, $\Pi_{\res}\le\Pi^\circ_R$, and \eqref{eq:core-credit} gives
\[
 \Pi_{\res}\le\sum_{t:b_t>0}b_t\log(M(R)/b_t)+\sum_{t:d_t=d/2}b_t
 \le M(R)\log M(R)+M(R)=\OO(M(R)L),
\]
since positive $b_t$ are at least one. Hence $D_{\res}=\OO(M(R))$ and the final incidence mass is $M(R)+D_{\res}=\OO(M(R))$. At any depth, the total mass of the occurrences is at most this final incidence mass, since their descendant leaves cover each occurrence separately. Linear work per occurrence over $\OO(\tau)$ levels gives $\OO(M(R)\tau)$ total work, which also bounds an early return.
\end{proof}

\subsection{Peeling}\label{sec:peeling}

We peel around the stored centers so that any large compact set left in the residual is disjoint from the compact sets that supplied those centers. The peeled children must be light and have disjoint cores, and their solutions must be cheap to merge once the residual has been solved. We charge a core of mass $b$ its size credit $b\log(M/b)$; this charge also bounds the mass duplicated by padding.

We write $\calS$ for the list of stored pairs $(s,\sigma)$ and $n_X=|X|$. Each stored center $s\in X$ comes with a direction $\sigma\in\{+,-\}$ such that $M(B_H^\sigma(s,7\Delta))<\beta M=M/8$.

$\Peel$ keeps an active set $U$, initially $X$. It visits test radii from largest to smallest, and at each radius scans the stored centers in a fixed order. If a center's ball has too little active mass, it skips the center. Otherwise it grows the ball until it finds a small shell, saves the pad as a peeled child, and removes the core from $U$. Every ball uses distances in $H$ and contains only vertices still in $U$.

$\Peel$ runs in stages $j=1,\ldots,J$. Stage $j$ uses mass threshold $M/2^{t_j}$ and test radius $\gamma_j$, where
\begin{equation}\label{eq:stages}
 t_j=2^{j+1}\quad(0\le j\le J),\qquad J=\log\tau-1,
\end{equation}
so that $t_0=2$ and $t_J=\tau$. The thresholds decrease to $M/2^\tau=Z$, and the test radii decrease from $7\Delta$ toward $6\Delta$. A ball tested at radius $\gamma_j$ may grow up to the preceding radius $\gamma_{j-1}$; we divide this interval into $N$ shells of width $\delta_j$ by setting
\begin{equation}\label{eq:stage-radii}
 N=\lceil4L/\theta\rceil,\qquad
 \gamma_j=6\Delta+\Delta/2^j,\qquad
 \delta_j=\frac{\Delta}{N2^j}=\frac{2\Delta}{Nt_j}.
\end{equation}
Thus $\gamma_0=7\Delta$, $\gamma_J>6\Delta$, and $\delta_j=\Theta(d/(Lt_j))$.

\begin{samepage}
\begin{quote}\small
\textbf{Algorithm 1.} $\Peel$ on the stored pairs $\calS$.
\begin{enumerate}[label=\arabic*.]
\item Set $U=X$ and initialize an empty list of accepted balls.
\item For $j=1,\ldots,J$ (so that the test radius $\gamma_j$ decreases), scan every stored pair $(s,\sigma)\in\calS$ in a fixed order:
\begin{enumerate}[label=(\alph*),leftmargin=*,itemsep=3pt,topsep=3pt]
\item If $M(B_H^\sigma(s,\gamma_j)\cap U)<M/2^{t_j}$, skip this pair.
\item Otherwise start at $r=\gamma_j$ and form
\begin{equation}\label{eq:filtered}
 B^\circ=B_H^\sigma(s,r)\cap U,\qquad
 B=B_H^\sigma(s,r+\delta_j)\cap U.
\end{equation}
If $M(\partial B)>\theta t_jM(B^\circ)/(2L)$, increase $r$ by $\delta_j$ and repeat. On success, record $(B^\circ,B)$, $\sigma$, and $j$, and set $U\leftarrow U\setminus B^\circ$.
\end{enumerate}
\item Return the accepted balls and the residual $R=U$.
\end{enumerate}
\end{quote}
\end{samepage}

\begin{lemma}[Peeling]\label{lem:peeling}
$\Peel$ constructs at most $|\calS|$ peeled children $Y_i$ with nonempty cores $Y_i^\circ\subseteq Y_i$ and a residual $R$, satisfying
\[
 X=R\sqcup\bigsqcup_i Y_i^\circ,\qquad M(Y_i)<\beta M.
\]
If $T\subseteq R$ is compact around a center in $R$ and $M(T)>Z$, then $T$ is disjoint from every set compact around a stored center. Writing $b_i=M(Y_i^\circ)$, the total duplicated mass satisfies
\begin{equation}\label{eq:peeling-dup}
 \sum_i M(\partial Y_i)
 \le\frac{\theta}{L}\sum_i b_i\log(M/b_i).
\end{equation}
Given each stored ball $B_H^\sigma(s,7\Delta)$ with its vertices sorted by distance, the construction takes $\OO(n_X(|\calS|+1)\log\tau)$ time.
\end{lemma}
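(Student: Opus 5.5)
The plan is to verify the four claims of \cref{lem:peeling} in turn, directly from the structure of Algorithm~1. The one observation driving everything is monotonicity: $U$ only shrinks, the test radii $\gamma_j$ strictly decrease, and an acceptance in stage $j$ removes from $U$ everything within distance $\gamma_j$ of the center in direction $\sigma$. The step I expect to be the real obstacle is the duplication bound~\eqref{eq:peeling-dup}, specifically the \emph{upper} bound on core mass that it needs.

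\emph{Partition and sizes.} Each accepted core is removed from $U$, so cores are pairwise disjoint and disjoint from the final $R=U$. Every vertex of $X$ therefore lies in exactly one of them or in $R$. A core is nonempty because it contains $B_H^\sigma(s,\gamma_j)\cap U$, which has mass at least $M/2^{t_j}>0$.

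Each pair is accepted at most once. After acceptance at stage $j$, the set $B_H^\sigma(s,\gamma_j)\cap U$ is empty, and since $\gamma_{j'}<\gamma_j$ for all later stages $j'$, the pair is skipped from then on. This gives at most $|\calS|$ children.

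Next, the growth loop stops before $r+\delta_j$ exceeds $\gamma_{j-1}$. Each failed test multiplies the core mass by more than $1+u$, with $u=\theta t_j/(2L)\le1$, and $\log(1+u)\ge u$ on $[0,1]$. After $N$ failures the mass would therefore exceed
\[
 (M/2^{t_j})\,2^{Nu}\ge (M/2^{t_j})\,2^{2t_j}>M,
\]
which is impossible. Hence the accepted pad lies within radius $\gamma_j+N\delta_j=\gamma_{j-1}\le7\Delta$. This gives $M(Y_i)\le M(B_H^\sigma(s,7\Delta))<\beta M$.

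\emph{Duplication (the main obstacle).} The shell test gives
\[
 M(\partial Y_i)\le\frac{\theta t_j}{2L}\,b_i,
\]
so it suffices to show $\log(M/b_i)\ge t_j/2=t_{j-1}$. The lower threshold $b_i\ge M/2^{t_j}$ points the wrong way, so I need an upper bound on $b_i$, and I would get it from stage $j-1$.

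In stage $j-1$ the pair was either skipped, meaning $M(B_H^\sigma(s,\gamma_{j-1})\cap U)<M/2^{t_{j-1}}$, or accepted, meaning that intersection became empty. Since $U$ only shrinks, at stage $j$ we still have $M(B_H^\sigma(s,\gamma_{j-1})\cap U)<M/2^{t_{j-1}}$. The core $B^\circ$ lies inside this set because $r<\gamma_{j-1}$. For $j=1$ I instead use the stored bound $M(B_H^\sigma(s,7\Delta))<M/8<M/2^{t_0}$. Summing over $i$ then gives~\eqref{eq:peeling-dup}.

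\emph{Disjointness.} Suppose $T\subseteq R$ is compact around $c$, and $T$ shares a vertex $v$ with a set compact around a stored center $s$. Then by the triangle inequality every $u\in T$ satisfies
\[
 \dist_H(s,u)\le\dist_H(s,v)+\dist_H(v,c)+\dist_H(c,u)\le6\Delta,
\]
and symmetrically in the inward direction. Hence $T\subseteq B_H^\sigma(s,6\Delta)\subseteq B_H^\sigma(s,\gamma_J)$ for the stored direction $\sigma$.

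Now look at stage $J$. Either the pair was accepted, in which case that ball was removed from $U\supseteq R\supseteq T$, contradicting $T\neq\emptyset$. Or it was skipped, in which case its mass inside $U\supseteq T$ is below $M/2^{t_J}=Z$, contradicting $M(T)>Z$.

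\emph{Running time.} For each of the $J=\OO(\log\tau)$ stages and each stored pair, scan the presorted list of $B_H^\sigma(s,7\Delta)$ once, skipping vertices no longer in $U$ and accumulating masses of consecutive radius intervals. This evaluates the skip test and all shell tests in $\OO(n_X)$ time. Removing a core costs its size, and the cores are disjoint. The total is $\OO(n_X(|\calS|+1)\log\tau)$.
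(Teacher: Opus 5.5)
Your proposal is correct and follows essentially the same argument as the paper. The paper packages your per-pair reasoning (in stage $j-1$ the pair was skipped or its ball was emptied, and $U$ only shrinks) as an inductive invariant $M(B_H^\sigma(s,\gamma_j)\cap U)\le M/2^{t_j}$ after every stage; it then uses this both for the bracket $t_j/2\le\log(M/b_i)$ and, at $j=J$, for the compact-set disjointness. In the running-time step, one point deserves to be stated explicitly, as the paper does: an empty shell always passes the test, so every failed trial consumes an active vertex, and the up to $N=\Theta(L)$ trials therefore cost only $\OO(n_X)$ per scan.
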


\begin{lemma}[Peeling reconstruction]\label{lem:peeling-reconstruction}
For the peeled children and residual constructed by \cref{lem:peeling}, given valid $c$-potentials on every $G[Y_i]$ and on $G[R]$, there is a deterministic algorithm that computes a valid $c$-potential on $G[X]$, finds a negative $c$-cycle, or returns a walk satisfying \eqref{eq:certificate}, in time
\begin{equation}\label{eq:peeling-merge}
 \OO\!\left(L\sum_i b_i\log(M/b_i)+L\tau M(R)\right).
\end{equation}
\end{lemma}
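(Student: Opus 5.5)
The idea is to merge in stages, in the reverse order of peeling. We start from $R$ and then add back the peeled children of stage $J$, then those of stage $J-1$, and so on down to stage $1$. Each stage is one application of \cref{thm:merge}.

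Let $X_j$ be the active set $U$ at the start of stage $j$ of $\Peel$, and set $X_{J+1}=R$. Then $X_1=X$, and $X_j$ is covered by $X_{j+1}$ together with the pads accepted during stage $j$.

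By induction from $j=J$ down to $1$, we hold a valid potential on $G[X_{j+1}]$. We form the ordered cover $\mathcal Y_j$ of $X_j$ that lists the stage-$j$ pads in acceptance order and then $X_{j+1}$. Each accepted ball $(B^\circ,B)$ at stage $j$ is a $\delta_j$-padded ball in $H[U]$ for the current $U$, since \eqref{eq:filtered} restricts the balls of $H$ to $U$. It therefore gives a $\delta_j$-padded split $(B,U\setminus B^\circ)$ of $H[U]$, and an inward ball gives the split with the order reversed. The cover $\mathcal Y_j$ is obtained from $(X_j)$ by successive refinements of the last member, so \cref{lem:composition} gives $a_{\mathcal Y_j}(P)\le\lfloor\ell(P)/\delta_j\rfloor$. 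Inward balls reverse the order; I would handle this either by placing such a pad after the remainder, or by reversing the split convention. This is still a refinement, just with $Y_1,Y_2$ swapped.

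Walks in $G[X_j]$ lie in $H[X_j]$. Any walk with $\ell(P)\le d$ then has $q_j=\OO(d/\delta_j)=\OO(Lt_j)$ ascents. \Cref{thm:merge} with this $q_j$ either returns a negative cycle or a certificate walk, both of which we pass upward, or returns a valid potential on $G[X_j]$. At $j=1$ this is a valid potential on $G[X]$. A negative cycle or certificate in $G[X_j]$ is also one in $G[X]$, since $G[X_j]$ is an induced subgraph.

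It remains to check the cost. Stage $j$ costs $\OO(q_jI_j)=\OO(Lt_j(M(X_{j+1})+\sum_{i\in\text{stage }j}M(Y_i)))$. For the pads, every stage-$j$ core has $b_i\ge M/2^{t_j}$ by the skip test. It also has $M(Y_i)<M/8$, so $b_i\le M/8$. Hence $t_j\le\log(M/b_i)$ fails in general, because $b_i$ could be as large as $M/8$ while $t_j=2^{j+1}$.

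I expect this to be the main obstacle. Charging the remainder $X_{j+1}$ at rate $t_j$ and the pads at rate $t_j$ requires a lower bound on $\log(M/b_i)$ of order $t_j$ for stage-$j$ cores. My guess is that this comes from the order of the stages. A ball skipped at stage $j-1$, meaning its active mass inside radius $\gamma_{j-1}$ is below $M/2^{t_{j-1}}$, has its pad confined within $\gamma_{j-1}$, because growth stops by $\gamma_{j-1}$ after at most $N$ shells, using the shell-test contraction as in \cref{lem:local}. So $b_i\le M(Y_i)<M/2^{t_{j-1}}=M/2^{t_j/2}$, giving $\log(M/b_i)\ge t_j/2$. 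This works for centers already scanned in stage $j-1$ and skipped there. A center accepted at stage $j-1$ lost its core, and re-entry needs the same confinement argument. I would first try to prove that every stage-$j$ acceptance has active mass within $\gamma_{j-1}$ below $M/2^{t_{j-1}}$, using that stage $j-1$ processed every center; under that invariant the pad terms cost $\OO(L\sum_ib_i\log(M/b_i))$ plus shells, which are bounded by \eqref{eq:peeling-dup}.

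For the remainder term $\sum_j Lt_jM(X_{j+1})$, write $M(X_{j+1})=M(R)+\sum_{\text{stages }j'>j}(\text{cores})$. The $M(R)$ part sums to $L\sum_jt_j=\OO(L\tau)$, because the $t_j$ are geometric, so this part is $\OO(L\tau M(R))$. A core peeled at stage $j'$ is charged $L\sum_{j<j'}t_j=\OO(Lt_{j'})=\OO(L\log(M/b_i))$ per unit mass, by the bound above. Summing gives \eqref{eq:peeling-merge}.
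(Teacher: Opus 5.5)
Your proposal follows the paper's proof. It uses the same reverse-order stage merges via \cref{thm:merge} with $q_j=\OO(Lt_j)$ from \cref{lem:composition}, and the same charging of $\sum_j t_jM(U_j)$ and $\sum_j t_jD_j$. The bound you flag as the obstacle is exactly \eqref{eq:rank-bracket}, which the paper proves inside the proof of \cref{lem:peeling} from the stage invariant \eqref{eq:stage-invariant} that you propose.

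Two points remain open in your sketch, and both close quickly.

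\begin{itemize}
\item The case of a center accepted in stage $j-1$ is vacuous. Its search starts at $r=\gamma_{j-1}$ and only grows, so its core removes all active mass from $B_H^\sigma(s,\gamma_{j-1})$. That center is then skipped in every later stage, so there is no re-entry.
\item For $j=1$ there is no stage $0$. The light-ball assumption $M(B_H^\sigma(s,7\Delta))<M/8\le M/2^{t_0}$ supplies the base case.
\end{itemize}

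A minor correction to your justification of the split: $(B^\circ,B)$ need not be a padded ball of $H[U]$. Distances in $H$ may pass through removed vertices, and the center may have left $U$. The split is nevertheless $\delta_j$-padded in $H[U]$, because the triangle inequality holds in $H$ and $\dist_{H[U]}\ge\dist_H$.
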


\begin{proof}[Proof of \cref{lem:peeling}]
Each acceptance removes its core from $U$, so the cores are disjoint and, together with the residual, partition $X$.

We prove by induction that, after stage $j$, the active set $U$ satisfies
\begin{equation}\label{eq:stage-invariant}
 M(B_H^\sigma(s,\gamma_j)\cap U)\le M/2^{t_j}
\end{equation}
for every stored pair $(s,\sigma)$. For $j=0$, we have $U=X$ and $\gamma_0=7\Delta$. The light-ball assumption gives mass less than $M/8$, hence at most $M/2^{t_0}=M/4$.

Assume the bound holds after stage $j-1$. Deletions during stage $j$ preserve that bound at radius $\gamma_{j-1}$. Consider a stored pair when the scan reaches it. If its test ball at $\gamma_j$ has mass below $M/2^{t_j}$, the pair is skipped and already satisfies the desired bound at radius $\gamma_j$.

Otherwise the test starts a shell search. Keep $U$ fixed during this search, and let $b_0$ and $b_N$ be the active ball masses at the two ends $\gamma_j$ and $\gamma_{j-1}$ of the search interval. If all $N$ shell tests failed, then
\[
 b_N>b_0\left(1+\frac{\theta t_j}{2L}\right)^N.
\]
Since $\theta t_j/(2L)\le1$ and $b_0\ge M/2^{t_j}$,
\[
 \log(b_N/b_0)>\frac{N\theta t_j}{4L}\ge t_j,
\]
which would give $b_N>M$. Thus the search succeeds before leaving the interval. Its accepted core contains every active vertex in the test ball at $\gamma_j$, so removing that core leaves zero mass in this ball, and the pair satisfies the desired bound. Since the radii decrease, all later test balls of this pair are empty; it therefore produces at most one ball, although it continues to be scanned.

Once a pair has been processed, later deletions preserve its bound. After scanning all pairs, \eqref{eq:stage-invariant} therefore holds for $j$, completing the induction. At $j=J$, the bound is $Z$ and $\gamma_J>6\Delta$. Hence, for every stored pair $(s_i,\sigma_i)$,
\begin{equation}\label{eq:all-centers}
 M(B_H^{\sigma_i}(s_i,6\Delta)\cap R)\le Z.
\end{equation}
Let $T\subseteq R$ be compact around $s\in R$, and suppose that it intersects a set $T_i$ compact around a stored center $s_i$ at a vertex $v$. For every $u\in T$, compactness gives
\begin{align*}
 \dist_H(s_i,u)&\le\dist_H(s_i,v)+\dist_H(v,s)+\dist_H(s,u)
 \le6\Delta,\\
 \dist_H(u,s_i)&\le\dist_H(u,s)+\dist_H(s,v)+\dist_H(v,s_i)
 \le6\Delta.
\end{align*}
Thus $T\subseteq B_H^{\sigma_i}(s_i,6\Delta)\cap R$, so \eqref{eq:all-centers} gives $M(T)\le Z$. This proves the compact-set guarantee.

For a core $Y_i^\circ$ accepted during stage $j$, write $b_i=M(Y_i^\circ)$. The test at $\gamma_j$ gives $b_i\ge M/2^{t_j}$. The bound after stage $j-1$ applies throughout the search interval, giving $b_i\le M/2^{t_{j-1}}$. Since $t_{j-1}=t_j/2$, the core and its accepted shell satisfy
\begin{equation}\label{eq:rank-bracket}
 \frac{t_j}{2}\le\log(M/b_i)\le t_j,\qquad
 M(\partial Y_i)\le\frac{\theta}{L}b_i\log(M/b_i).
\end{equation}
For the shell bound, the successful test gives $M(\partial Y_i)\le\theta t_jb_i/(2L)$, and $t_j/2\le\log(M/b_i)$. Summing over accepted cores proves \eqref{eq:peeling-dup}. Each pad lies in $B_H^\sigma(s,7\Delta)$, so $M(Y_i)<\beta M$.

For the work bound, keep the vertices and distances of each stored ball $B_H^\sigma(s,7\Delta)$ in nondecreasing distance order. Each size test scans at most $n_X$ entries and checks whether each vertex remains in $U$. During a shell search, two monotone pointers maintain core and pad mass in this list. Every failed test has positive shell mass and moves at least one active vertex into the next core; an empty shell succeeds. Hence pointer advances and trial processing cost $\OO(n_X)$ per accepted ball. Recording its vertex lists and deleting its core has the same bound. There are $|\calS|J$ size tests and at most $|\calS|$ accepted balls. Initializing $U$ costs $\OO(n_X)$, yielding $\OO(n_XJ(|\calS|+1))=\OO(n_X(|\calS|+1)\log\tau)$ work.
\end{proof}

\begin{proof}[Proof of \cref{lem:peeling-reconstruction}]
Start with the supplied valid potential on $G[R]$. Let $U_j$ be the active set before stage $j$, and let $D_j$ be the total shell mass accepted during that stage. Set $U_{J+1}=R$. Process the stages in reverse order, from $J$ down to $1$. When processing stage $j$, a valid potential on $G[U_{j+1}]$ is already available. We combine it with the given potentials on the children peeled during stage $j$ to obtain a potential on $G[U_j]$.

Starting from $(U_j)$, each outward peel with core $B^\circ$ and pad $B$ refines the entry equal to the current active set $U$ into $(B,U\setminus B^\circ)$, and each inward peel refines it into $(U\setminus B^\circ,B)$. These refinements give an ordered cover $\mathcal Y_j$ consisting of the outward pads in chronological order, the solved child $U_{j+1}$, and the inward pads in reverse chronological order. Each split is $\delta_j$-padded in $H[U]$: before intersecting with $U$, it comes from a padded ball in $H$, and restriction can only increase distances. By \cref{lem:composition}, every walk $P$ in $G[U_j]$ with $\ell(P)\le d$ satisfies $a_{\mathcal Y_j}(P)\le\lfloor d/\delta_j\rfloor=\OO(Lt_j)$. The solved set $U_{j+1}$ enters as one induced child with a valid potential. The stage's peeled cores and $U_{j+1}$ form a core partition of $\mathcal Y_j$, with duplicated mass $D_j$. Hence
\begin{equation}\label{eq:stage-incidence}
 \sum_{Y\in\mathcal Y_j}M(Y)=M(U_j)+D_j.
\end{equation}
By \cref{thm:merge}, this merge takes $\OO(Lt_j(M(U_j)+D_j))$ time. If a stage accepted no ball, the potential on $U_{j+1}$ already serves for $U_j$.

Consider a core $Y_i^\circ$ accepted during stage $j$. It is contained in $U_h$ exactly for $1\le h\le j$. Since $t_{h+1}=2t_h$, its contribution to $\sum_h t_hM(U_h)$ is at most
\[
 b_i\sum_{h=1}^j t_h<2b_it_j\le4b_i\log(M/b_i),
\]
where the last inequality is \eqref{eq:rank-bracket}. The vertices of $R$ lie in every $U_j$, and $\sum_{j=1}^Jt_j<2\tau$. Summing over the core partition of $X$ gives
\begin{equation}\label{eq:suffix-sum}
 \sum_{j=1}^Jt_jM(U_j)
 \le4\sum_i b_i\log(M/b_i)+2\tau M(R).
\end{equation}
For the shell terms, \eqref{eq:peeling-dup} gives
\[
 \sum_{j=1}^Jt_jD_j
 \le\tau\sum_i M(\partial Y_i)
 \le\frac{\theta\tau}{L}\sum_i b_i\log(M/b_i)
 \le\theta\sum_i b_i\log(M/b_i),
\]
since $\tau\le L$. These bounds prove \eqref{eq:peeling-merge}. Every intermediate parent is an induced subset of $X$, and all merges use the same parameter $d$. Each successful merge supplies a valid potential for the next one. A failed merge returns a negative $c$-cycle or a walk $P$ in $G[X]$ with $c(P)<0$ and $\ell(P)>d$.
\end{proof}

\subsection{Construction}\label{sec:construction}

We alternate peeling and residual division. After each unsuccessful attempt, we store a new center with a light direction, and the next attempt starts again from $X$. All choices use fixed deterministic tie breaking.

\begin{samepage}
\begin{quote}\small
\textbf{Algorithm 2.} $\Decompose(X,d)$.
\begin{enumerate}[label=\arabic*.]
\item Start with an empty list $\calS$ of pairs $(s,\sigma)$.
\item Run $\Peel$ from all of $X$ using $\calS$. Let $R$ be the residual and keep the peeled balls of this attempt temporarily.
\item Run $\Residual(R)$. If it returns a cover, return its children together with the peeled children.
\item Otherwise let $(T,s)$ be the returned compact set and center. Compute both radius-$7\Delta$ balls around $s$ in $H$. If both have mass at least $\beta M$, return the three-child decomposition of $X$ described below.
\item Append $(s,\sigma)$ to $\calS$ for a light direction $\sigma$. Discard the peeled balls and the residual construction, and return to step 2.
\end{enumerate}
\end{quote}
\end{samepage}

When both radius-$7\Delta$ balls around $s$ are heavy, use $N_h=\lceil16L/\theta\rceil$ shells in $[7\Delta,8\Delta]$ to find an inward and an outward padded ball whose shells each have mass at most $\theta M/(16L)$. Apply \cref{lem:heavy-split} with $S=X$, $r_{\max}=8\Delta$, and $\delta_h=\Delta/N_h=\Theta(d/L)$. The side children have mass at most $(1-\beta)M$, the central child has weak diameter at most $16\Delta=d/2$, and duplication is at most $\theta M/(8L)$. The central child receives parameter $d/2$ and the side children retain $d$. The shell searches and split construction take $\OO(M)$ time.

\begin{lemma}[Final guarantees]\label{lem:final-guarantees}
If Algorithm 2 returns peeled children and a residual cover, the children have total incidence mass $\OO(M)$ and each has mass at most $M/8$. Every child assigned $d/2$ has weak diameter at most $d/2$ in $H$.
\end{lemma}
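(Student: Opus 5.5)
The plan is to treat the three claims separately: the mass bound on each child, the weak-diameter bound on the $d/2$ children, and the linear incidence mass. In every case the final output of Algorithm~2 is the list of peeled children from the last call of $\Peel$, followed by the cover $\mathcal Y_{\res}$ that $\Residual(R)$ returned for that call's residual $R$.

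For the mass bound, the peeled children are handled by \cref{lem:peeling}, which gives $M(Y_i)<\beta M=M/8$ directly. Each residual child has mass at most $Z=M/2^\tau$ by \cref{lem:residual}. Since $\tau\ge4$, this gives $Z\le M/16\le M/8$.

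For the weak-diameter bound, only residual children can receive $d/2$, because every peeled child keeps $d$. \Cref{lem:residual} already states that such a child has $\wdiam_H\le d/2$. This holds because central children are compact in an induced subgraph $H[S]$ with $S\subseteq X$, and distances in $H$ are no larger than those in $H[S]$.

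For the incidence mass, the peeled children's total is $M(X\setminus R)+\sum_i M(\partial Y_i)$, since their cores together with $R$ partition $X$. By \eqref{eq:peeling-dup}, the shell term is at most $\frac{\theta}{L}\sum_i b_i\log(M/b_i)$. Each core satisfies $b_i\ge1$, so $\log(M/b_i)\le\log M\le L$. The shell term is therefore at most $\theta\sum_i b_i\le\theta M$.

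For the residual cover, \cref{lem:residual} gives incidence mass $\OO(M(R))$, via $D_{\res}\le\frac{2\theta}{L}\Pi_{\res}$ and $\Pi_{\res}=\OO(M(R)L)$. Adding the two parts gives $\OO(M)$.

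The main point to check is the bound $\log(M/b_i)\le L$. The inequality $b_i\ge1$ holds because $\omega(v)\ge\lambda\ge1$ and the cores are nonempty, as \cref{lem:peeling} guarantees. The inequality $M\le M_0$ holds because $X\subseteq V$, so $\log M\le L$. Beyond that, the only thing to verify is that no child comes from the discarded earlier attempts, which is immediate from step~5 of Algorithm~2.
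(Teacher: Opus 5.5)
Your proposal is correct and follows the paper's proof essentially step for step: you bound the peeled incidence mass by $(1+\theta)\sum_i b_i\le(1+\theta)M$ using \eqref{eq:peeling-dup} with $b_i\ge1$ and $\log(M/b_i)\le L$, and you read the residual incidence bound, the cutoff $Z\le M/16$, and the weak-diameter guarantee for children assigned $d/2$ off \cref{lem:residual}. Your extra checks are accurate and only make explicit what the paper leaves implicit: that only residual children receive $d/2$, and that children of discarded attempts do not appear in the output.
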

\begin{proof}
The peeled cores $Y_i^\circ$ are disjoint. Write $b_i=M(Y_i^\circ)$. Since $b_i\ge1$ and $\log(M/b_i)\le L$, the duplication bound in \cref{lem:peeling} gives
\[
 \sum_i M(Y_i)\le(1+\theta)\sum_i b_i\le(1+\theta)M.
\]
The residual $R$ has a cover of incidence mass $\OO(M(R))$ by \cref{lem:residual}. Each peeled pad has mass less than $M/8$, and each residual child has mass at most $Z=M/2^\tau\le M/16$. \Cref{lem:residual} also gives the diameter bound for every child assigned $d/2$.
\end{proof}

To bound the number of attempts, note that $X$ contains fewer than $2^\tau=M/Z$ pairwise disjoint sets of mass greater than $Z$.

\begin{lemma}[Construction work]\label{lem:construction-work}
Algorithm 2 terminates after at most $2^\tau$ attempts and takes time
\begin{equation}\label{eq:construction-work}
 \OO(M2^\tau\tau+n_X2^{2\tau}\log\tau),
\end{equation}
including every unsuccessful attempt.
\end{lemma}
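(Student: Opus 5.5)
We bound the number of attempts by a disjointness argument and then add up the cost of a single attempt. Each unsuccessful attempt ends at step 5. There, $\Residual(R)$ has returned a set $T\subseteq R$ with $M(T)>Z$ that is compact around a center $s\in R$. That center is then stored in $\calS$.

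The first claim is that the sets $T$ from distinct unsuccessful attempts are pairwise disjoint. Take an attempt $k$ and let $T_{k'}$ be the set from any earlier attempt $k'<k$. Then $T_{k'}$ is compact around the center $s_{k'}$, and $s_{k'}$ is stored in $\calS$ with a light direction $\sigma_{k'}$. The light direction means $M(B_H^{\sigma_{k'}}(s_{k'},7\Delta))<\beta M$, which is exactly the precondition assumed by $\Peel$. By \cref{lem:peeling}, the set $T_k\subseteq R$ has $M(T_k)>Z$ and is compact around a center in $R$, so it is disjoint from every set compact around a stored center. In particular $T_k\cap T_{k'}=\emptyset$.

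Disjointness gives the count. There are pairwise disjoint subsets of $X$, each of mass greater than $Z=M/2^\tau$, and their masses sum to at most $M$. Hence there are fewer than $2^\tau$ of them, so fewer than $2^\tau$ unsuccessful attempts occur. The next attempt either returns at step 3 or returns the three-child decomposition at step 4. So there are at most $2^\tau$ attempts in total, and $|\calS|<2^\tau$ at every attempt.

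Next we bound the cost of one attempt, in which $|\calS|<2^\tau$.
\begin{itemize}[leftmargin=*,itemsep=3pt,topsep=5pt]
\item $\Peel$ needs each stored ball $B_H^\sigma(s,7\Delta)$ with its vertices sorted by distance. When a center is added at step 5, we run Dijkstra once in $H$, in time $\OO(M)$, which emits vertices in nondecreasing distance order. We keep that sorted list permanently; the balls do not change, since they are measured in $H$ rather than in the active set. This costs $\OO(M)$ per attempt overall.
\item Given these lists, \cref{lem:peeling} bounds peeling by $\OO(n_X(|\calS|+1)\log\tau)=\OO(n_X2^\tau\log\tau)$.
\item $\Residual(R)$ costs $\OO(M(R)\tau)=\OO(M\tau)$ by \cref{lem:residual}.
\item Step 4 computes two radius-$7\Delta$ balls, in time $\OO(M)$. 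In the heavy case it also performs the shell searches and the heavy split, in time $\OO(M)$.
\end{itemize}
So one attempt costs $\OO(M\tau+n_X2^\tau\log\tau)$.

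Multiplying by at most $2^\tau$ attempts gives $\OO(M2^\tau\tau+n_X2^{2\tau}\log\tau)$, which is \eqref{eq:construction-work}. The main obstacle is the disjointness step, and specifically confirming that the hypotheses of \cref{lem:peeling} hold at every attempt. Two points need checking. First, every stored pair must satisfy the light-ball precondition: a center is stored only when some direction is light, since otherwise step 4 returns. Second, the compactness of $T$ must be measured in $H$ and not merely in $H[R]$. This holds because \cref{lem:residual} measures compactness in induced subgraphs, and a subgraph can only increase distances, so compactness there implies compactness in $H$.
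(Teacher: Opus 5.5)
Your proof is correct and follows the paper's argument. The sets exposing the stored centers are pairwise disjoint by \cref{lem:peeling}, which bounds the number of attempts by $2^\tau$; the per-attempt costs of $\Peel$, $\Residual$, and the radius-$7\Delta$ ball computations then sum to \eqref{eq:construction-work}. Your uniform bound $|\calS|<2^\tau$ per attempt gives the same total as the paper's sum over $k+1$, and the only cost you leave implicit, building induced edge lists for the returned children, is $\OO(M)$ and is absorbed.
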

\begin{proof}
Associate each stored pair $(s_i,\sigma_i)$ with the compact set $T_i$ that exposed its center. These sets are pairwise disjoint by \cref{lem:peeling}: each new proposal is a compact set $T\subseteq R$ of mass greater than $Z$, so it is disjoint from every $T_i$ associated with a stored center.

If $|\calS|$ pairs have been stored, disjointness gives $|\calS|Z<\sum_iM(T_i)\le M$, so $|\calS|<2^\tau$. Each attempt either returns a decomposition or stores a new pair, so the algorithm terminates within $2^\tau$ attempts.

For each proposed center, the two radius-$7\Delta$ balls and their vertices in distance order are computed once by truncated searches in $H$, taking $\OO(M)$ time. Including a possible final heavy test, this is $\OO(M2^\tau)$. Each residual construction costs at most $\OO(M\tau)$, giving $\OO(M2^\tau\tau)$ overall.

The peeling attempts use $0,1,\ldots,|\calS|$ stored pairs. Their total work is
\[
 \OO\!\left(n_X\log\tau\sum_{k=0}^{|\calS|}(k+1)\right)
 =\OO(n_X(|\calS|+1)^2\log\tau).
\]
Induced edge lists are created only for the returned children. Their construction costs their incidence mass, which is $\OO(M)$ by \cref{lem:final-guarantees} when returning peeled children and a residual cover, and by the heavy construction otherwise.
\end{proof}

The quadratic cost of rescanning the stored balls is what limits the cutoff. By \eqref{eq:tau-range},
\begin{equation}\label{eq:cutoff-bounds}
 L^{1/4}<2^\tau\le\sqrt L,\qquad 2^{2\tau}\le L,\qquad 2^\tau\tau=\OO(L).
\end{equation}
Since $n_X\lambda\le M$ and $\log\tau=\OO(\lambda)$, every term in \eqref{eq:construction-work} is $\OO(ML)$. When $L$ is too small for $\tau\ge4$, we apply \cref{lem:local} directly to $X$; this happens only when $L$ is bounded by a constant.

\begin{proof}[Proof of \cref{thm:decomp}]
The work bound follows from \cref{lem:construction-work} with this choice of $\tau$. For peeled children and a residual cover, \cref{lem:final-guarantees} gives the incidence and child-parameter guarantees. It remains to show that padding leaves enough potential decrease to pay for this work.

We index peeled children by $i$ and residual children by $t$. Take a core partition $(Y_t^\circ)$ of the residual cover $\mathcal Y_{\res}=(Y_t)$, with masses $b_t=M(Y_t^\circ)$. Together with the peeled cores $Y_i^\circ$, of masses $b_i$, these form a core partition of the final cover of $X$. Its size credit splits as
\begin{equation}\label{eq:size-credit-split}
 \kappa=\kappa_{\peel}+\kappa_{\res},\qquad
 \kappa_{\peel}=\sum_i b_i\log(M/b_i),\qquad
 \kappa_{\res}=\sum_{t:b_t>0} b_t\log(M/b_t).
\end{equation}
Let $b_{1/2}$ be the total core mass of children with parameter $d/2$. By \eqref{eq:core-credit}, the potential decrease before padding is
\begin{equation}\label{eq:accounting-credit}
 \Pi^\circ_X=\kappa+b_{1/2}.
\end{equation}
Padding introduces $D_X=D_{\peel}+D_{\res}$ extra mass, where
\[
 D_{\peel}=\sum_iM(\partial Y_i),\qquad
 D_{\res}=\sum_tM(Y_t)-M(R).
\]

\Cref{lem:peeling} gives $D_{\peel}\le\theta\kappa_{\peel}/L$. For the residual cover, \eqref{eq:residual-dup} gives $D_{\res}\le2\theta\Pi_{\res}/L$. The actual decrease $\Pi_{\res}$ is no larger than the decrease before padding for the chosen core partition, so by \eqref{eq:core-credit},
\[
 D_{\res}\le\frac{2\theta}{L}
       \left(\sum_{t:b_t>0} b_t\log(M(R)/b_t)+b_{1/2}\right).
\]
When $M(R)>0$, the identity
\begin{equation}\label{eq:credit-split}
 \kappa_{\res}=M(R)\log(M/M(R))+\sum_{t:b_t>0} b_t\log(M(R)/b_t)
\end{equation}
shows that the sum in this bound is at most $\kappa_{\res}$; when $M(R)=0$, we have $D_{\res}=0$. We conclude
\[
 D_X=D_{\peel}+D_{\res}
 \le\frac{2\theta}{L}(\kappa_{\peel}+\kappa_{\res}+b_{1/2})
 =\frac{2\theta}{L}\Pi^\circ_X.
\]

Peeled pads have mass less than $M/8$ and residual children at most $M/2^\tau\le M/16$. Thus every positive core mass is at most $M/8$, and $\kappa\ge M\log8$. Positive core masses are at least one, so $\kappa\le M\log M$ and $b_{1/2}\le M$. We have proved
\begin{equation}\label{eq:final-estimates}
 D_X\le\frac{2\theta}{L}\Pi^\circ_X,\qquad \Pi^\circ_X=\Omega(M),\qquad
 \Pi^\circ_X=\OO(ML).
\end{equation}
By \cref{lem:padding-cost}, $\Pi^\circ_X-\Pi_X\le6LD_X\le12\theta\Pi^\circ_X=\Pi^\circ_X/2$, so
\begin{equation}\label{eq:progress}
 \Pi_X\ge\Pi^\circ_X/2=\Omega(M).
\end{equation}

For the three-child decomposition constructed directly on $X$, choose any core partition. As in the proof of \cref{lem:local}, the side children have mass at most $7M/8$ and the central child receives $d/2$, so $\Pi^\circ_X\ge M/8$. \Cref{lem:heavy-split} bounds duplication by $\theta M/(8L)\le\theta\Pi^\circ_X/L$, so the incidence mass is $\OO(M)$, and \cref{lem:padding-cost} gives $\Pi_X\ge\Pi^\circ_X-6\theta\Pi^\circ_X\ge M/16$. The side children and central child have the required mass and diameter bounds.

If $\tau<4$, use \cref{lem:local} directly on $X$. It takes $\OO(M)$ time and gives the required child bounds and actual potential decrease $\Omega(M)$.
\end{proof}

\subsection{Reconstruction}\label{sec:reconstruction}

\begin{proof}[Proof of \cref{thm:reconstruction}]
Suppose the construction returns peeled children and a residual cover $\mathcal Y_{\res}$ of $R$. Valid potentials on all these children have been supplied. We first merge $\mathcal Y_{\res}$ to obtain a potential on $G[R]$, then combine it with the potentials on the peeled children using \cref{lem:peeling-reconstruction}.

By \cref{lem:residual}, the cover $\mathcal Y_{\res}$ has incidence mass $\OO(M(R))$, and every walk in $G[R]$ of clipped length at most $d$ has ascent number $\OO(L\tau)$. \Cref{thm:merge} therefore merges this cover in $\OO(L\tau M(R))$ time. Applying \cref{lem:peeling-reconstruction} gives total work $\OO(L(\kappa_{\peel}+\tau M(R)))$, where $\kappa_{\peel}$ is the size credit of the peeled cores in \eqref{eq:size-credit-split}.

The small residual children pay for the $\tau M(R)$ term. Use the core partition from the proof of \cref{thm:decomp}. Each positive residual core mass satisfies $b_t\le M(Y_t)\le Z=M/2^\tau$, so
\begin{equation}\label{eq:residual-credit}
 \kappa_{\res}=\sum_{t:b_t>0}b_t\log(M/b_t)\ge\tau M(R).
\end{equation}
Together with \eqref{eq:accounting-credit} and \eqref{eq:progress}, this gives
\begin{equation}\label{eq:combine-charge}
 L(\kappa_{\peel}+\tau M(R))\le L(\kappa_{\peel}+\kappa_{\res})
 =L\kappa\le2L\Pi_X.
\end{equation}

For the three-child decomposition constructed directly on $X$, \cref{lem:heavy-split,lem:composition} with $\delta_h=\Theta(d/L)$ bound the ascent number by $\OO(L)$ on walks of clipped length at most $d$. Its incidence mass is $\OO(M)$, so \cref{thm:merge} takes $\OO(ML)$ time. This is $\OO(L\Pi_X)$ because $\Pi_X=\Omega(M)$ by \cref{thm:decomp}.

In the remaining case $\tau<4$, the cutoff choice bounds $L$ by a constant. The cover from \cref{lem:local} has incidence mass $\OO(M)$, and by \cref{lem:composition} its ascent number is $\OO(1)$ on walks of clipped length at most $d$. Its merge takes $\OO(M)$ time, again within $\OO(L\Pi_X)$.

All merges use parameter $d$ on induced subsets of $X$. If any merge returns a negative $c$-cycle or a walk $P$ with $c(P)<0$ and $\ell(P)>d$, that cycle or walk lies in $G[X]$. Otherwise the merges produce the required valid potential on $G[X]$.
\end{proof}

\section{Shortest paths}\label{sec:sssp}

\begin{theorem}[One scaling step]\label{thm:one-scale}
There is a deterministic algorithm that, given a graph with integral edge weights at least $-W$, finds a negative-weight cycle or a valid potential for $c=w+W/2$ in $\OO(M_0L^2)$ time.
\end{theorem}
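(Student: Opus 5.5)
We run a recursive procedure $\Solve(X,d)$. The top call is $\Solve(V,d_{\mathrm{top}})$, where $d_{\mathrm{top}}$ is the largest admissible scale in \cref{thm:decomp}, namely $M_0^2W/2$ rounded to a power-of-two multiple of $W/2$. On an instance $(X,d)$ the procedure does the following. If $d$ has fallen to $W/2$, or $X$ is a single vertex or empty, it solves the base case directly. Otherwise it calls \cref{thm:decomp}, recursively solves every child $(Y_i,d_i)$, and combines the child potentials with \cref{thm:reconstruction}.

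The top call is fine because every $c$-path has $\ell$-length at most $n\cdot\max\ell$. Either this is already at most $d_{\mathrm{top}}$, or we first cap weights by the standard trick of \cite{BNWN25}. So no certificate walk with $\ell(P)>d$ can arise at the root, and we may take the root $d$ to exceed every relevant path length.

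At the base $d=W/2$, a child has weak diameter at most $W/2$ in $H$. Every negative-$c$ edge has $c\ge-W/2$ and $\ell=0$. I plan to handle this base by a Dijkstra/BF-Dijkstra run with $\eta=\OO(1)$ negative edges, in $\OO(M(X))$ time. Small walks then contain few negative edges relative to the scale, using the diameter bound. This is the step I am least sure about.

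The main work is the failure cases. If a reconstruction returns a walk $P$ with $c(P)<0$ and $\ell(P)>d$, we must either certify a negative cycle in $G$ or get a contradiction. We propagate $P$ upward. In the parent $(X',d')$ with $d'\in\{d,2d\}$, the walk $P$ is a walk in $G[X']$. Here I would argue that a walk with $\ell(P)>d$ but $c(P)<0$ cannot occur when the parent has the weak diameter guarantee. The sum of the positive parts of $c$ is $\ell(P)$, and the negative parts total at most $(W/2)\cdot\#\{\text{negative edges}\}$. Bounding the number of negative edges via the length is not available, so instead we use the standard approach: any such walk shows that the current guess fails. At the top level we then run one Bellman--Ford/Dijkstra to extract a negative cycle or conclude. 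I would rather follow \cite{Li26}, where such a walk at the root, with $d$ huge, is impossible unless a negative cycle exists, and inner walks are passed up and re-tested by the parent merge. I expect this certificate propagation to be the main obstacle, and I would resolve it by making every merge at scale $d$ return a certificate usable by the parent, where it is short relative to $2d$.

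Running time. By \cref{thm:decomp} and \cref{thm:reconstruction}, the work at each internal node $(X,d)$ is $\OO(L\Pi_X)$. The $\Pi_X$ telescope along the recursion tree to at most $\Psi(M_0,d_{\mathrm{top}})$, since $\Psi\ge0$ at the leaves by \cref{lem:padding-cost}. We have $\Psi(M_0,d_{\mathrm{top}})=M_0(\log M_0+\log(2M_0^2))=\OO(M_0L)$. The total internal work is therefore $\OO(M_0L^2)$.

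Leaf work is linear in leaf mass. Total leaf mass is bounded by the sum of incidence masses. Each internal node with $\Pi_X=\Omega(M(X))$ pays for its children's mass, which is $\OO(M(X))$. So the leaf work is $\OO(M_0L)$ as well. This gives $\OO(M_0L^2)$ overall.
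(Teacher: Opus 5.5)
Your running-time accounting is the paper's. Each internal node costs $\OO(L\Pi_X)$, the decreases telescope to $\Psi(M_0,d_0)=\OO(M_0L)$ with nonnegative leaf potentials, and child incidence $\OO(M(X))$ pays for the leaves; for an early stop, telescope over the expanded nodes and treat pending children as leaves.

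\textbf{Gap: failure certificates.} The gap is in correctness. You never turn the certificate of \cref{thm:reconstruction} into a negative cycle. Propagating the walk upward does not help: at the parent, or at the root, all you know is $\ell(P)>d$ at the child's scale, which neither gives a contradiction nor yields a cycle. The missing idea is an edgewise comparison of $w$ with $c$ and $\ell$. If $c(e)<0$ then $c(e)\ge-W/2$, so $w(e)=c(e)-W/2\le2c(e)$; otherwise $w(e)\le c(e)=\ell(e)$. Summing over a returned walk gives
\[
 w(P)\le 2c(P)-\ell(P)<-\ell(P)<-d .
\]
So the certificate is already more negative under $w$ than $d$. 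Now maintain the invariant $\wdiam_{\bar G}(X)\le d$ whenever $d<d_0$. Children receiving $d/2$ have weak diameter at most $d/2$ in the parent's $H$, hence in $\bar G$, and children retaining $d$ are subsets of their parent. Then one Dijkstra search in $\bar G$ finds a path from the end of $P$ back to its start of clipped length at most $d$. Its $w$-weight is at most $d$ because $w\le\ell$, so it closes $P$ into a negative closed walk in $G$, from which cycle erasure extracts a negative cycle. The algorithm stops at its first certificate, so this adds $\OO(M_0)$ once.

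\textbf{Root and base case.} At the root, and at descendants retaining $d_0$, there is no diameter guarantee. Your claim that no certificate can arise there is not justified. Certificates are walks, so bounding $\ell$ on paths, with or without capping weights, says nothing about them. When $G$ has a negative cycle, such walks can be returned and must still be converted; a Bellman--Ford pass at the top would also break the time bound. The paper instead cycle-erases $P$. If every removed cycle had nonnegative $w$-weight, the remaining simple path would have $w$-weight below $-d_0$ by the display above. But a simple path has $w$-weight at least $-(n-1)W>-d_0$, a contradiction.

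Your base case needs no BF-Dijkstra, and your reasoning about few negative edges is not what makes it work. With $d\le W/2$ and the invariant, any edge $(u,v)$ with $c(u,v)<0$ has $w(u,v)<-W/2$. The return path from $v$ to $u$ has $w$-weight at most $d\le W/2$, so together they form a negative $w$-cycle. If there is no such edge, the zero potential is valid for $c$.
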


\subsection{Algorithm}

The initial instance is $(V,d_0)$, where
\begin{equation}\label{eq:root-parameter}
 d_0=M_0^2W/2.
\end{equation}
Every descendant parameter is obtained by retaining or halving its parent's parameter. We maintain the invariant
\begin{equation}\label{eq:weak-invariant}
 d<d_0\quad\Longrightarrow\quad\wdiam_{\bar G}(X)\le d.
\end{equation}

\begin{samepage}
\begin{quote}\small
\textbf{Algorithm 3.} $\Solve(X,d)$.
\begin{enumerate}[label=\arabic*.]
\item If $X$ is empty, return the empty potential. If $d<W/2$, scan $G[X]$. Return the zero potential if all its $c$-weights are nonnegative; otherwise use a negative $c$-edge to return a negative $w$-cycle as described below.
\item Construct the children using \cref{thm:decomp}.
\item Recursively solve the children. If a call returns a negative $w$-cycle, return it immediately.
\item Merge the child potentials using \cref{thm:reconstruction}. Return the parent potential if successful. Convert a returned cycle or walk certificate to a negative $w$-cycle and return it.
\end{enumerate}
\end{quote}
\end{samepage}

In the size base case, a graph with few enough vertices can be solved directly with Bellman--Ford.

\subsection{Correctness}

If a child retains $d$, it is a subset of its parent and inherits \eqref{eq:weak-invariant} whenever that invariant applies. If it receives $d/2$, \cref{thm:decomp} certifies weak diameter at most $d/2$ in the parent's clipped graph $H$, and therefore in $\bar G$. This establishes the invariant at the first diameter decrease and preserves it thereafter.

\begin{lemma}[Converting a certificate]\label{lem:cycle-certificate}
In a recursive instance satisfying \eqref{eq:weak-invariant}, every failure certificate from \cref{thm:reconstruction} yields a negative-weight cycle in $G$ under $w$. Apart from reading the certificate, at most one $\OO(M_0)$-time shortest-path search in $\bar G$ is needed.
\end{lemma}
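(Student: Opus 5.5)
Two kinds of certificate can come out of \cref{thm:reconstruction}: a negative $c$-cycle in $G[X]$, or a walk $P$ in $G[X]$ with $c(P)<0$ and $\ell(P)>d$. I handle them separately.

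\textbf{Negative $c$-cycle.} Let $C$ be a $c$-negative cycle with $k$ edges. Since $c=w+W/2$ edgewise,
\[
 w(C)=c(C)-kW/2<0 .
\]
So $C$ is already a negative $w$-cycle. No search is needed, and the cost is just reading the certificate.

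\textbf{Walk certificate, case $d=d_0$.} If $d=d_0=M_0^2W/2$, I will show the certificate cannot occur. Every edge has $c(e)\le w(e)+W/2$. I want to argue that $\ell(P)>d_0$ together with $c(P)<0$ is impossible without a negative closed walk. The plan is to decompose $P$ into a simple path plus cycles. If some cycle in the decomposition has negative $c$-weight, it is negative under $w$ as well, and we return it. Otherwise $c(P)\ge c(\text{simple path})$.

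This does not immediately bound $\ell(P)$, so I would rather avoid this case entirely. The cleaner route is to treat the root specially. At $d=d_0$, I argue that every walk with $c(P)<0$ contains a cycle $C$ with $c(C)<0$, because a simple path has at most $n$ edges of clipped length $\le\ell$. Hence $\ell$ of the simple part is small, while the excess clipped length lies on cycles. The hard part is that a long walk with $c(P)<0$ need not obviously yield a negative cycle. I would handle it by scanning $P$ for a repeated vertex whose intermediate closed subwalk has negative $c$-weight. If no such closed subwalk exists, repeatedly removing closed subwalks leaves a simple path $P'$ with $c(P')\le c(P)<0$. A simple path has $c(P')\ge -nW/2$ and is fine as a certificate only if $\ell(P')>d$.

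\textbf{Walk certificate, case $d<d_0$: the main step.} Here the invariant gives $\wdiam_{\bar G}(X)\le d$. Let $u$ and $v$ be the first and last vertices of $P$. Run one Dijkstra search in $\bar G$ from $v$, costing $\OO(M_0)$, to obtain a path $Q:v\leadsto u$ with $\ell(Q)\le d$. The closed walk $P\cdot Q$ then has
\[
 c(PQ)\le c(P)+\ell(Q)
\]
since $c\le\ell$ edgewise. I need the stronger inequality $c(PQ)<0$ under $w$. Using $w=c-W/2$ per edge, and noting that a closed walk with $k$ edges has $w=c-kW/2$, it suffices to get a strictly negative $w$.

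The obstacle is that $c(P)+\ell(Q)$ may be positive. The fix is to exploit $\ell(P)>d$ through clipped edges. For every edge, $\ell(e)-c(e)\le W/2$, and $\ell(e)-c(e)=0$ unless $c(e)<0$. So
\[
 w(P)=c(P)-|P|W/2 .
\]
A sharper bound comes from $w(e)=c(e)-W/2\le \ell(e)-W/2$, which, I expect, gives $w(P)\le\ell(P)-|P|W/2$. That is the wrong direction.

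I would therefore instead use edges with $c(e)\ge0$. Each such edge has $w(e)=\ell(e)-W/2$, and each clipped edge has $w(e)\le -W/2$. Hence
\[
 w(PQ)=c(PQ)-|PQ|W/2 .
\]
To make this negative I want the lengths to be balanced: $\ell(P)>d\ge\ell(Q)$. Combining with $c(P)<0$, I expect the scaling choice $c=w+W/2$ to give $w(P)\le -|P|W/2+c(P)$ and $w(Q)\le\ell(Q)$. The remaining gap is then bounded by comparing $\ell(P)$ against $|P|W/2$; this is where I expect the real difficulty. Finally, extracting a simple negative cycle from the closed walk $PQ$ is linear in its length.
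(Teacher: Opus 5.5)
Your handling of a negative $c$-cycle is correct. For $d<d_0$ you also correctly use \eqref{eq:weak-invariant} to obtain a return path $Q$ with $\ell(Q)\le d$ from one Dijkstra search in $\bar G$. But neither walk case is finished. You never show that $P\cdot Q$ is negative, and you leave this as ``the real difficulty''. At $d=d_0$ you reach no contradiction.

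The $d=d_0$ attempt also uses the wrong weight function: a valid certificate need not contain any negative $c$-cycle. Take an edge $(u,x)$ with $c(u,x)<0$, and a cycle $C$ through $x$ but not $u$ with $c(C)=0<\ell(C)$. For large $k$, the walk $P=(u,x)\cdot C^k$ satisfies $c(P)<0$ and $\ell(P)>d_0$. Yet every closed subwalk of it has $c$-weight $0$. The cycle $C$ is negative only under $w$, since $w(C)=-|C|W/2$. So cycle erasure must test the erased cycles under $w$, not $c$.

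The missing idea is an inequality that turns the clipped excess $\ell(P)-c(P)$ into $w$-negativity. On an edge with $c(e)<0$ you have $c(e)\ge -W/2$, so $w(e)=c(e)-W/2\le 2c(e)$. On the remaining edges, $w(e)\le c(e)=\ell(e)$. The negative edges have total $c$-weight $c(P)-\ell(P)$, so summing gives
\[
 w(P)\le \ell(P)+2\bigl(c(P)-\ell(P)\bigr)=2c(P)-\ell(P)<-\ell(P)<-d.
\]
In your own formulation $w(P)=c(P)-|P|W/2$, the right comparison is not $\ell(P)$ against $|P|W/2$. It is $\ell(P)-c(P)\le|P|W/2$, which holds because each negative edge is clipped by at most $W/2$.

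Both cases then close at once:
\begin{itemize}
\item For $d<d_0$, $w(Q)\le c(Q)\le\ell(Q)\le d$. Hence $w(P\cdot Q)<0$, and cycle erasure yields a negative $w$-cycle.
\item For $d=d_0$, no search is needed. Cycle-erase $P$ under $w$. If no erased cycle is $w$-negative, the remaining simple path has $w$-weight at most $w(P)<-d_0$. This is impossible: it has at most $n-1$ edges of weight at least $-W$, and $d_0=M_0^2W/2\ge n^2W/2>(n-1)W$.
\end{itemize}
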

\begin{proof}
A negative $c$-cycle is also negative for $w$, since $w(e)\le c(e)$. Otherwise let $P$ be the returned walk with $c(P)<0$ and $\ell(P)>d$. On a negative $c$-edge we have $c(e)\ge-W/2$ and $w(e)=c(e)-W/2\le2c(e)$; on the other edges $w(e)\le c(e)$. Summing these bounds, the nonnegative $c$-edges contribute $\ell(P)$ and the negative ones contribute at most $2(c(P)-\ell(P))$. Thus
\begin{equation}\label{eq:negative-original}
 w(P)\le2c(P)-\ell(P)<-\ell(P)<-d.
\end{equation}

For $d<d_0$, the invariant \eqref{eq:weak-invariant} gives a path in $\bar G$ from the end of $P$ back to its start, of clipped length at most $d$. We compute it by Dijkstra. Its $w$-weight is at most its clipped length, so it closes $P$ into a negative closed walk, and cycle erasure extracts a negative cycle.

For $d=d_0$, that is, at the root and at descendants that retain $d_0$, cycle-erase $P$ directly. If no removed cycle has negative $w$-weight, the remaining simple path has weight less than $-d_0$ by \eqref{eq:negative-original}. A simple path has at most $n-1$ edges, each of weight at least $-W$, whereas
\[
 d_0=M_0^2W/2\ge n^2W/2>(n-1)W.
\]
This is impossible, so some removed cycle is negative.
\end{proof}

At the base case $d<W/2$, a negative $c$-edge $(u,v)$ has $w(u,v)<-W/2$. Since $d<d_0$, a clipped return path from $v$ to $u$ has length at most $d<W/2$ and closes it to a negative $w$-cycle. Without such an edge, the zero potential is valid for $c$.

The recursion terminates. A child retaining its parameter loses a fixed fraction of the inherited mass, while the other children halve the parameter. Mass is bounded below by one for a nonempty set, and a parameter can be halved only $\OO(\log(d_0/W))=\OO(L)$ times before the base case. Every node has finitely many children.

Induction now proves correctness of Algorithm 3. Successful child calls supply exactly the potentials required by \cref{thm:reconstruction}. Successful reconstruction returns a valid parent potential. A failure gives a negative $w$-cycle by \cref{lem:cycle-certificate}. At the root, a valid $c$-potential supplies the scaling step.

\subsection{Running time}

\begin{proof}[Proof of \cref{thm:one-scale}]
Consider the actual recursion tree of Algorithm 3. Its internal nodes have the potential decreases $\Pi_X$ from \eqref{eq:actual-drop}. For a complete successful execution, these decreases telescope exactly, giving
\begin{equation}\label{eq:telescope}
 \sum_{X\text{ internal}}\Pi_X
 =\Psi(M_0,d_0)-\sum_{Y\text{ terminal}}\Psi(M(Y),d_Y)
 \le\Psi(M_0,d_0)=\OO(M_0L).
\end{equation}
The root bound follows from $\log(4d_0/W)=1+2\log M_0$. Terminal parameters are at least $W/4$, so their potentials are nonnegative. The same inequality holds for an execution stopped by a negative cycle: telescope over the expanded nodes and retain all pending children as boundary leaves. Their potentials are also nonnegative.

By \cref{thm:decomp}, constructing the decomposition of $X$ costs $\OO(M(X)L)$ and $\Pi_X=\Omega(M(X))$. By \cref{thm:reconstruction}, merging the child potentials costs $\OO(L\Pi_X)$. Thus all internal work is $\OO(L\sum_X\Pi_X)=\OO(M_0L^2)$. Total child incidence is
\[
 \OO\!\left(\sum_{X\text{ internal}}M(X)\right)=\OO(M_0L).
\]
It pays for creating induced child graphs, maintaining membership data, and scanning terminal instances. The return search of \cref{lem:cycle-certificate} adds only $\OO(M_0)$, since the algorithm ends at its first certificate.
\end{proof}

\begin{proof}[Proof of \cref{thm:main}]
Apply \cref{thm:one-scale} in the scaling reduction of \cref{lem:scaling}. There are $\OO(\log(nW))$ scaling steps and one final nonnegative SSSP computation. Substituting $M_0=\Theta(m+n\log\log n)$ and $L=\Theta(\log n)$ gives the stated bound. All procedures use deterministic choices.
\end{proof}

\end{document}